\newcommand{\journo}{}
\newcommand{\Calvin}{\textsc{Calvin}}
\newcommand{\Cassandra}{\textsc{Cassandra}}
\newcommand{\Dgraph}{\textsc{Dgraph}}
\newcommand{\Jepsen}{\textsc{Jepsen}}
\newcommand{\Knossos}{\textsc{Knossos}}
\newcommand{\Gecode}{\textsc{Gecode}}
\newcommand{\Name}{\textsc{Elle}}
\newcommand{\LineUp}{\textsc{Line-Up}}
\newcommand{\Porcupine}{\textsc{Porcupine}}
\newcommand{\TiDB}{\textsc{TiDB}}
\newcommand{\YugaByte}{\textsc{YugaByte DB}}
\newcommand{\FaunaDB}{\textsc{FaunaDB}}
\newcommand{\Spanner}{\textsc{Spanner}}
\newcommand{\Percolator}{\textsc{Percolator}}
\newcommand{\Postgres}{\textsc{Postgres}}
\newcommand{\QuickCheck}{\textsc{QuickCheck}}
\newtheorem{axiom}{Axiom}
\begin{document}

\title{Elle: Inferring Isolation Anomalies from Experimental Observations}

\numberofauthors{2}
\author{
\alignauthor
    Kyle Kingsbury\\
    \affaddr{Jepsen}\\
     \email{aphyr@jepsen.io}
\alignauthor
    Peter Alvaro\\
    \affaddr{UC Santa Cruz}\\
    \email{palvaro@ucsc.edu}
}
%\author{Peter Alvaro}
%\affiliation{U.C. Santa Cruz}
%\email{palvaro@ucsc.edu}

%\author{Kyle Kingsbury}
%\affiliation{Jepsen}
%\email{aphyr@jepsen.io}

\date{June 2019}

%together with [preprint], this suppresses the VLDB copyright stuff
\toappear{}
\maketitle

\begin{abstract}
Users who care about their data store it in databases, which (at least in principle) guarantee some form of transactional isolation. However, experience shows \cite{hermitage,jepsen-reports} that many databases do not provide the isolation guarantees they claim. With the recent proliferation of new distributed databases, demand has grown for checkers that can, by generating client workloads and injecting faults, produce anomalies that witness a violation of a stated guarantee. An ideal checker would be sound (no false positives), efficient (polynomial in history length and concurrency), effective (finding violations in real databases), general (analyzing many patterns of transactions), and informative (justifying the presence of an anomaly with understandable counterexamples). Sadly, we are aware of no checkers that satisfy these goals.

We present \Name{}: a novel checker which infers an Adya-style dependency graph between client-observed transactions. It does so by carefully selecting database objects and operations when generating histories, so as to ensure that the results of database reads reveal information about their version history. \Name{} can detect every anomaly in Adya et al's formalism \cite{adya-gil} (except for predicates), discriminate between them, and provide concise explanations of each. 

This paper makes the following contributions: we present \Name{}, demonstrate its soundness, measure its efficiency against the current state of the art, and give evidence of its effectiveness via a case study of four real databases.
\end{abstract}

\section{Introduction}
\label{sec:intro}

Database systems often offer multi-object transactions at varying isolation levels, such as serializable or read committed. However, design flaws or bugs in those databases may result in weaker isolation levels than claimed. In order to verify whether a given database actually provides claimed safety properties, we can execute transactions against the database, record a concurrent history of how those transactions completed, and analyze that history to identify invariant violations. This property-based approach to verification is especially powerful when combined with fault injection techniques.~\cite{random-testing}

Many checkers use a particular pattern of transactions, and check that under the expected isolation level, some hand-proved invariant(s) hold. For instance, one could check for long fork (an anomaly present in parallel snapshot isolation) by inserting two records $x$ and $y$ in separate transactions, and in two more transactions, reading both records. If one read observes $x$ but not $y$, and the other observes $y$ but not $x$, then we have an example of a long fork, and can conclude that the system does not provide snapshot isolation---or any stronger isolation level.

These checkers are generally efficient (i.e. completing in polynomial time), and do identify bugs in real systems, but they have several drawbacks. They find a small number of anomalies in a specific pattern of transactions, and tell us nothing about the behavior of other patterns. They require hand-proven invariants: one must show that for chosen transactions under a given consistency model, those invariants hold. They also do not compose: transactions we execute for one checker are, in general, incompatible with another checker. Each property may require a separate test.

Some researchers have designed more general checkers which can analyze broader sets of possible transactions. For instance, \Knossos~\cite{knossos} and \Porcupine~\cite{porcupine} can verify whether an arbitrary history of operations over a user-defined datatype is linearizable, using techniques from Wing \& Gong~\cite{wing-concurrent} and Lowe~\cite{lowe}. Since strict-1SR is equivalent to linearizability (where operations are transactions, and the linearizable object is a map), these checkers can be applied to strict serializable databases as well. While this approach does find anomalies in real databases, its use is limited by the NP-complete nature of linearizability checking, and the combinatorial explosion of states in a concurrent multi-register system.

Serializability checking is \textit{also} (in general) NP-complete~\cite{papadimitriou}---and unlike linearizability, one cannot use real-time constraints to reduce the search space. Following the abstract execution formalism of Cerone, Bernardi, and Gotsman~\cite{cerone}, Kingsbury attempted to verify serializability by identifying write-read dependencies between transactions, translating those dependencies to an integer constraint problem on transaction orders~\cite{gretchen}, and applying off-the-shelf constraint solvers like \Gecode{}~\cite{gecode} to solve for a particular order. This approach works, but, like \Knossos, is limited by the NP-complete nature of constraint solving. Histories of more than a hundred-odd transactions quickly become intractable. Moreover, constraint solvers give us limited insight into \textit{why} a particular transaction order was unsolvable. They can only tell us whether a history is serializable or not, without insight into specific transactions that may have gone wrong. Finally, this approach cannot distinguish between weaker isolation levels, such as snapshot isolation or read committed.

%We would \textit{like} to have a generalized checker for %transactional isolation given experimentally accessible histories. 
What would an ideal checker for transaction isolation look like? Such a checker would accept many patterns of transactions, rather than specific, hand-coded examples. It would distinguish between different types of anomalies, allowing us to verify stronger (e.g. strict-1SR) and weaker (e.g. read uncommitted) isolation levels. It ought to localize the faults it finds to specific subsets of transactions. Of course, it should do all of this efficiently.

In this paper, we present \Name{}: an isolation checker for black-box databases. Instead of solving for a transaction order, \Name{} uses its knowledge of the transactions issued by the client, the objects written, and the values returned by reads to reason about the possible dependency graphs of the opaque database in the language of Adya's formalism.~\cite{adya-gil} While \Name{} can make limited inferences from read-write registers, it shines with richer datatypes, like append-only lists.

All history checkers depend on the system which \textit{generated} their transactions. \Name{}'s most powerful analysis requires that we generate histories in which reads of an object correspond return its entire version history, and where a unique mapping exists between versions and transactions. However, we show that generating histories which allow these inferences is straightforward, that the required datatypes are broadly supported, and that these choices do not prevent \Name{} from identifying bugs in real-world database systems.

\section{The Adya Formalism}
In their 2000 paper ``Generalized Isolation Level Definitions'', Adya, Liskov, and O'Neil formalized a variety of transactional isolation levels in terms of proscribed behaviors in an abstract history $H$. Adya et al.'s histories (hereafter: "Adya histories") comprise of a set of transactions, an \textit{event order} which encodes the order of operations in those transactions, and a \textit{version order} $\ll$: a total order over installed versions of each object. Two anomalies, G1a (aborted reads) and G1b (intermediate reads), are defined as transactions which read versions written by aborted transactions, or which read versions from the middle of some other transaction, respectively. The remainder are defined in terms of cycles over a \textit{Direct Serialization Graph} (DSG), which captures the dependencies between transactions. Setting aside predicates, Adya et al.'s dependencies are:

\begin{itemize}
    \item \textbf{Directly write-depends}. $T_i$ installs $x_i$, and $T_j$ installs $x$'s next version
    \item \textbf{Directly read-depends}. $T_i$ installs $x_i$, $T_j$ reads $x_i$
    \item \textbf{Directly anti-depends}. $T_i$ reads $x_i$, and $T_j$ installs $x$'s next version
\end{itemize}

The direct serialization graph $DSG(H)$ is a graph over transactions in some history $H$, whose edges are given by these dependencies. A G0 anomaly is a cycle in the DSG comprised entirely of write dependencies. G1c anomalies include read dependencies. Instances of G2 involve at least one anti-dependency (those with exactly one are G-single).

This is a tantalizing model for several reasons. Its definitions are relatively straightforward. Its anomalies are \textit{local}, in the sense that they involve a specific set of transactions. We can point to those transactions and say, "Things went wrong here!"---which aids debugging. Moreover, we can check these properties in \textit{linear time}: intermediate and aborted reads are straightforward to detect, and once we've constructed the dependency graph, cycle detection is solvable in O(vertices + edges) time, thanks to Tarjan’s algorithm for strongly connected components~\cite{tarjan}.

However, there is one significant obstacle to working with an Adya history: \textit{we don't have it}. In fact, one may not even exist. The database system may not have any concept of a version order, or it might not expose that ordering information to clients.

As an example, consider Adya et al.'s history $H_{serial}$, as it would be observed by clients. Is this history serializable?

\begin{itemize}
    \item [$T_1$:] $w(z_1), w(x_1), w(y_1), c$
    \item [$T_2$:] $r(x_1), w(y_2), c$
    \item [$T_3$:] $w(x_3), r(y_2), w(z_3), c$
\end{itemize}

$T_2$ read $x_1$, so it must read-depend on $T_1$, and likewise, $T_3$ must read-depend on $T_2$. What about $T_1$ and $T_2$’s writes to $y$? Which overwrote the other? As Crooks et al observe [17], we cannot tell, because we lack a key component in Adya’s formalism: the version order. $H_{serial}$ includes additional information ($x_1 \ll x_3, y1 \ll y_2, z_1 \ll z_3$) which is invisible to clients. We might consider deducing the version order from the real-time order in which writes or commits take place, but Adya et al explicitly rule this out, since optimistic and multi-version implementations might require the freedom to commit earlier versions later in time. Moreover, network latency may make it impossible to precisely determine concurrency windows.

We would like to be able to \textit{infer} an Adya history based purely on the information available to clients of the system, which we call an \textit{observation}. When a client submits an operation to a database, it generally knows what kind of operation it performed (e.g. a read, a write, etc.), the object that operation applies to, and the arguments it provided. For instance, a client might write the value 5 to object $x$. If the database returns a response for an operation, we may also know a return value; for instance, that a read of $x$ returned the number 5.

Clients know the order of operations within every transaction. They may also know whether a transaction was definitely aborted, definitely committed, or could be either, based on whether a commit request was sent, and how (or if) the database acknowledged that request. Clients can also record their own per-client and real-time orders of events.

This is not, in general, enough information to go on. Consider a pair of transactions which set $x$ to 1 and 2, respectively. In the version order, does $1$ or $2$ come first? We can't say! Or consider an indeterminate transaction whose effects are not observed. Did it commit? We have no way to tell.

This implies there might be many possible histories which are compatible with a given observation. Are there conditions under which only one history is possible? Or, if more than one is possible, can we infer something about the structure of \textit{all} of them which allows us to identify anomalies?

We argue that yes: one \textit{can} infer properties of every history compatible with a given observation, by taking advantage of datatypes which allow us to trace the sequence of versions which gave rise to a particular version of an object, and which let us recover the particular writes which gave rise to those versions. Next, we provide an intuition for how this can be accomplished.

\section{Deducing Dependencies}
\begin{figure*}[htb!]
\begin{center}
\begin{tabular}{l l l l}
 Object & Versions & $x_{init}$ & Writes \\
\hline
 Register   & Any       & nil  & $w(x_i, a) \xrightarrow{} (a, nil)$ \\
 Counter    & Integers  & 0    & $w(x_i, a) \xrightarrow{} (x_i+a, nil)$ \\
 Set Add    & Sets      & \{\} & $w(x_i, a) \xrightarrow{} (x_i \cup \{a\}, nil)$ \\
 List Append & Lists    & []   & $w([e_1, \ldots, e_n], a) \xrightarrow{} ([e_1, \ldots, e_n, a], nil)$ \\
\end{tabular}
\end{center}
\caption{Example objects.}
\label{fig:example-objects}
\end{figure*}

Consider a set of observed transactions interacting with some read-write \textbf{register} $x$. One transaction $T_j$ read $x$ and observed some version $x_i$. Another transaction $T_i$ wrote $x_i$ to $x$. In general, we cannot tell whether $T_i$ was the transaction which produced $x_i$, because some \textit{other} transaction might have written $x_i$ as well. However, if we know that no other transaction wrote $x_i$, then we can recover the particular transaction which wrote $x_i$: $T_i$. This allows us to infer a direct write-read dependency: $T_i <_{wr} T_j$.

If every value written to a given register is unique\footnotemark, then we can recover the transaction which gave rise to any observed version. We call this property \textit{recoverability}: every version we observe can be mapped to a specific write in some observed transaction.

\footnotetext{This approach is used by Crooks et al, and has a long history in the literature.}

Recoverability allows us to infer read dependencies. However, inferring write- and anti-dependencies takes more than recoverability: we need the \textit{version order} $\ll$. Read-write registers make inferring $\ll$ impossible in general: if two transactions set $x$ to $x_i$ and $x_j$ respectively, we can't tell which came first.

In a sense, blind writes to a register ``destroy history''. If we used a compare-and-set operation, we could tell something about the preceding version, but a blind write can succeed regardless of what value was present before. Moreover, the version resulting from a blind write carries no information about the previous version. Let us therefore consider richer datatypes, whose writes \textit{do} preserve some information about previous versions.

For instance, we could take increment operations on \textbf{counters}, starting at 0. If every write is an increment, then the version history of a register should be something like (0, 1, 2, \ldots). Given two transactions $T_i$ and $T_j$, both of which read object $x$, we can construct a read-read dependency $T_i <_{rr} T_j$ if $T_i$'s observed value of $x$ is smaller than $T_j$'s. However, any non-trivial increment history is \textit{non-recoverable}, because we can't tell \textit{which} increment produced a particular version. This keeps us from inferring write-write, write-read, and read-write dependencies. We could return the resulting version from our writes, but this works only when the client receives acknowledgement of its request(s).

What if we let our objects be \textbf{sets} of elements, and had each write add a unique element to a given set? Like counters, this lets us recover read-read dependencies whenever one version is a proper superset of another. Moreover, we can recover some (but not all) write-write, write-read, and read-write dependencies. Consider these observed transactions:

\begin{itemize}
    \item[$T_0$:] $read(x, \{0\})$
    \item[$T_1$:] $add(x, 1)$
    \item[$T_2$:] $add(x, 2)$
    \item[$T_3$:] $read(x, \{0, 1, 2\})$
\end{itemize}

Since $\{0, 1, 2\}$ is a proper superset of $\{0\}$, we know that $T_3$ read a higher version of $x$ than $T_0$, and can infer $T_0 <_{rr} T_3$. In addition, we can infer that $T_1 <_{wr} T_3$ and $T_2 <_{wr} T_3$, since their respective elements 1 and 2 were both visible to $T_3$. Conversely, since $T_0$'s read of {0} did \textit{not} include 1 or 2, we can infer that $T_0 <_{rw} T_1$ and $T_0 <_{rw} T_2$: anti-dependency relations! We cannot, however, identify the write-write dependency between $T_1$ and $T_2$: the database could have executed $T_1$ and $T_2$ in either order with equivalent effects, because sets are order-free. Only a read of $\{0, 1\}$ or $\{0, 2\}$ could resolve this ambiguity.

So, let's add order to our values. Let each version be an ordered \textbf{list}, and a write to $x$ append a unique value to $x$. Then any read of $x_i$ tells us the order of all versions written prior: $x_i = [1, 2, 3]$ implies that $x$ took on the versions $[], [1], [1, 2]$, and $[1, 2, 3]$ in exactly that order. We call this property \textit{traceability}. Since appends add a unique element to the \textit{end} of $x$, we can also infer the exact write which gave rise to any observed version: these versions are \textit{recoverable} as well.

As with counters and sets, we can use traceability to reconstruct read-read, write-read, and read-write dependencies---but because we have the full version history, we can precisely identify read-write and write-write dependencies for every transaction whose writes were observed by some read. We know, for instance, that the transaction which added 2 to $x$ must write-depend on the transaction which added 1 to $x$, because we have a read of $x$ wherein 1 immediately precedes 2. There may be some writes near the end of a history which are never observed, but so long as histories are long and include reads every so often, the unknown fraction of a version order can be made relatively small.

Recoverability and traceability are the key to inferring dependencies between observed transactions, but we have glossed over the mapping between observed dependencies and Adya histories, as well as the challenges arising from aborted and indeterminate transactions. In the following section, we discuss these issues more rigorously.

% I cannibalized the proof sketch into some signposts.  let's see if this works...
%\section{Proof Sketch}
%\input{sketch}

%\ifthenelse{\boolean{sigmod}} {} {

\section{Formal Model}
\label{sec:formalism}

In this section we present our abstract model of databases and transactional isolation.  We establish the notions of \emph{traceability} and \emph{recoverability}, which we prove to be sufficient to reason directly from external observations to internal histories.  We show that this approach is \emph{sound}: that is, any anomalies identified in an observation must be present in any Adya history consistent with that observation.

\begin{sigmod}

Due to space considerations, we do not present the formal definitions of traceability and recoverability or their accompanying proofs here; instead, we summarize these results.

\end{sigmod}
%we omit the proofs (which are straightforward) of many of the %auxiliary lemmata. All of the proofs can be found in a technical %report~\cite{techreport}.

\subsection{Preliminaries}

We begin our formalism by defining a model of a database, transactions, and histories, closely adapted from Adya et al. We omit predicates for simplicity, and generalize Adya's read-write registers to objects supporting arbitrary transformations as writes, resulting in a \textit{version graph}. We constrain Adya's \textit{version order $\ll$} to be compatible with this version graph. This generalization introduces a new class of anomaly, \textit{dirty updates}, which we define in Section~\ref{sec:dirty}.

\subsubsection{Objects, Operations, Databases}

An \textit{object} $x$ is a mutable datatype, consisting of a set of \textit{versions}\footnote{For simplicity, we assume versions \textit{are} values, and that versions do not repeat in a history.}, written $x_i$, $x_j$, etc., an \textit{initial version}, labeled $x_{init}$, and a set of \textit{object operations}.

An object operation represents a state transition between two versions $x_i$ and $x_j$ of some object $x$. Object operations take an argument $a$ and produce a return value $r$. 
\begin{sigmod}
We write this as $f(x, x_i, a) \xrightarrow{} (x_j, r)$. Where the object, argument, return value, or return tuple can be inferred from context or are unimportant, we may omit them: $f(x_i, a) \xrightarrow{} (x_j)$, $f(x_i) \xrightarrow{} (x_j)$, $f(x_i)$, etc.
\end{sigmod}
\begin{journal}
\journo{}
Formally, an operation is a 5-tuple of objects, pre-versions, arguments, post-versions, and return values, but for clarity, we write $f(x, x_i, a) \xrightarrow{} (x_j, r)$. Where the object, argument, return value, or return tuple can be inferred from context or are unimportant, we may omit them, e.g. $f(x_i, a) \xrightarrow{} (x_j)$, $f(x_i) \xrightarrow{} (x_j)$, $f(x_i)$, etc.
%\color{black}
\end{journal}

Like Adya, we consider two types of operations: reads ($r)$ and writes ($w$). A \textit{read} takes no argument, leaves the version of the object unchanged, and returns that version: $r(x_i, nil) \xrightarrow{} (x_i, x_i)$.

As we show in Figure~\ref{fig:example-objects}, a write operation $w$ changes a version somehow. Write semantics are object-dependent. Adya's model (like much work on transactional databases) assumes objects are registers, and that writes blindly replace the current value. Our other three objects incorporate increasingly specific dependencies on previous versions. In this section we are primarily concerned with objects like list append, but we provide definitions for the first three objects as illustrative examples.

The versions and write operations on an object $x$ together yield a \textit{version graph} $v_x$: a directed graph whose vertices are versions, and whose edges are write operations.

A \textit{database} is a set of objects $x$, $y$, etc.

\subsubsection{Transactions}

A \textit{transaction} is a list (a totally ordered set) of object operations, followed by at most one \textit{commit} or \textit{abort} operation. Transactions also include a unique identifier for disambiguation.

We say a transaction is \textit{committed} if it ends in a commit, and \textit{aborted} if it ends in an abort. We call the version resulting from a transaction's last write to object $x$ its \textit{final version} of $x$; any other writes of $x$ result in \textit{intermediate versions}. If a transaction commits, we say it \textit{installs} all final versions; we call these \textit{installed versions}. Unlike Adya, we use \textit{committed versions} to refer to versions written by committed transactions, including intermediate versions. Versions from transactions which did not commit are called \textit{uncommitted versions}.

The initial version of every object is considered committed.

\subsubsection{Histories}

Per Adya et al, a \textit{history} $H$ comprises a set of transactions $T$ on objects in a database $D$, a partial order $E$ over operations in $T$, and a version order $\ll$ over versions of the objects in $D$.

The event order $E$ has the following constraints:

\begin{itemize}
    \setlength\itemsep{0em}
    \item It preserves the order of all operations within a transaction, including commit or abort events.
    \item One cannot read from the future: if a read $r(x_i)$ is in $E$, then there must exist a write $w \xrightarrow{} (x_i)$ which precedes it.
    \item Transactions observe their own writes: if a transaction $T$ contains $w(x_i)$ followed by $r(x_j)$, and there exists no $w(x_k)$ between the write and read of $x_i$ in $T$, $x_i = x_j$.
    \item The history must be complete: if $E$ contains a read or write operation for a transaction $T_i$, $E$ must contain a commit or abort event for $T_i$.
\end{itemize}

The version order $\ll$ has two constraints, per Adya et al
\begin{itemize}
    \setlength\itemsep{0em}
    \item It is a total order over installed versions of each individual object; there is no ordering of objects written by uncommitted, aborted, or intermediate transactions.
    \item The version order for each object $x$ (which we write $\ll_x$) begins with the initial version $x_{init}$.
\end{itemize}

Driven by the intuition that the version order should be consistent with the order of operations on an object, that $x_i \ll x_j$ means $x_j$ overwrote $x_i$, and that cycles in $\ll$ are meant to capture anomalies like \textit{circular information flow}, we introduce an additional constraint that was not necessary in Adya et al.'s formalism: the version order $\ll$ should be consistent with the version graphs $v_x$, $v_y$, \ldots.

\begin{sigmod}
Specifically, if $x_i \ll x_j$, there exists a path from $x_i$ to $x_j$ in $v_x$. It would be odd if one appended 3 to the list $[1, 2]$, obtaining $[1, 2, 3]$, and yet the database believed $[1, 2, 3] \ll [1, 2]$.
\end{sigmod}

\begin{journal}
\journo
\begin{axiom}
\label{version-order->version-graph}
If $x_i \ll x_j$, there exists a path from $x_i$ to $x_j$ in $v_x$.
\end{axiom}

Since Adya's formalism uses registers, their version graphs are \textit{totally connected}: any version can be replaced with any other version in a single write. Axiom \ref{version-order->version-graph} is therefore trivial for registers, and need not be stated. For more complex datatypes, we believe this constraint is warranted: it would be odd if one appended 3 to the list $[1, 2]$, obtaining $[1, 2, 3]$, and yet the database believed $[1, 2, 3] \ll [1, 2]$.

\end{journal}
%\color{black}

\subsubsection{Dependency Graphs}

We define write-write, read-write, and write-read dependencies between transactions, adapted directly from Adya's formalism.  
\begin{journal}
\journo
We also define a fourth dependency class, read-read dependencies, that is useful when writes cannot be related to reads via external observations.
%\color{black}
\end{journal}

Finally, we (re-)define the \textit{Direct Serialization Graph} using those dependencies. The anomalies we wish to find are expressed in terms of that serialization graph.

\begin{itemize}
    \setlength\itemsep{0em}
    \item \textbf{Direct write-write dependency}. A transaction $T_j$ \textit{directly ww-depends} on $T_i$ if $T_i$ installs a version $x_i$ of $x$, and $T_j$ installs $x$'s next version $x_j$, by $\ll$.
    \item \textbf{Direct write-read dependency}. A transaction $T_j$ \textit{directly wr-depends} on $T_i$ if $T_i$ installs some version $x_i$ and $T_j$ reads $x_i$.\footnotemark{}
    \item \textbf{Direct read-write dependency}. A transaction $T_j$ \textit{directly rw-depends} on $T_i$ if $T_i$ reads version $x_i$ of $x$, and $T_j$ installs $x$'s next version in $\ll$.

\begin{journal}
\journo{}
    \item \textbf{Direct read-read dependency}. A transaction $T_j$ \textit{directly rr-depends} on $T_i$ if $T_i$ reads version $x_i$ of $x$, and $T_j$ reads $x$'s next version in $\ll$.
%\color{black}
\end{journal}    

\end{itemize}

\footnotetext{It appears that Adya et al's read dependencies don't rule out a transaction depending on itself. We preserve their definitions here, but assume that in serialization graphs, $T_i \neq T_j$.}

\begin{journal}
\journo{}

In Adya et al's formalism, rr-depends is redundant: any rr-dependency implies an rw-dependency followed by a wr-dependency. We mention it here because for some objects, like counters, it may be the most useful dependency we can infer.
\end{journal}

A \textit{Direct Serialization Graph}, or \textit{DSG}, is a graph of dependencies between transactions. The DSG for a history $H$ is denoted DSG($H$). If $T_j$ ww-depends on $T_i$, there exists an edge labeled \textit{ww} from $T_i$ to $T_j$ in DSG(H), and similarly for wr- and rw-dependencies. 
\begin{journal}
\journo{}
One could construct alterate serialization graphs involving other dependencies, such as read-read dependencies, process orders, or realtime orders, but we leave those as an exercise to the reader.
%\color{black}
\end{journal}

\subsubsection{Dirty Updates}

Adya defines \textit{G1a: aborted read} as a committed transaction $T_2$ containing a read of some value $x_i$ which was written by an aborted transaction $T_1$. However, our abstract datatypes allow a transaction to commit a \textit{write} which incorporates aborted state.

We therefore define a complementary phenomenon to G1a, intended to cover scenarios in which information ``leaks'' from uncommitted versions to committed ones via writes. A history exhibits \textit{dirty update} if it contains an uncommitted transaction $T_1$ which writes $x_i$, and a committed transaction $T_2$ which contains a write acting on $x_i$.

\subsubsection{Traceable Objects, Version Orders and Version Graphs}

We now define a class of \textit{traceable objects}, which permit recovery of the version graph and object operations resulting in any particular version. 

Recall that for an object $x$, the version graph $v_x$ is comprised of versions connected by object operations. We call a path in $v_x$ from $x_{init}$ to some version $x_i$ a \textit{trace} of $x_i$.
Intuitively, a trace captures the version history of an object.

We say an object $x$ is \textit{traceable} if every version $x_i$ has exactly one trace; i.e. $v_x$ is a tree.

Given a history with version order $\ll$, we call the largest version of $x$ (by $\ll_x$) $x_{max}$. Because $\ll$ is a total order over installed versions, and because a path in the version graph exists between any two elements ordered by $\ll$, it follows that every committed version of $x$ is in the trace of $x_{max}$. Moreover, for any installed version $x_i$ of $x$, we can recover the prefix of $\ll_x$ up to and including $x_i$ simply by removing intermediate and aborted versions from the trace of $x_i$.

\begin{journal}
\journo{}

\begin{lemma}
\label{traceable->x_max-trace-all-committed}
If $x$ is a traceable object in a history $H$, every committed version of $x$ is in the trace of $x_{max}$.
\end{lemma}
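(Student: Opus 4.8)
The plan is to reduce the claim to a purely graph-theoretic statement about the version graph and then discharge it using Axiom~\ref{version-order->version-graph}. First I would record the structural consequence of traceability: since every version of $x$ has exactly one trace and $v_x$ is a tree, $v_x$ is an arborescence rooted at $x_{init}$ (every version is reachable from $x_{init}$ by a unique directed path). In such a graph the trace of any version $w$ is exactly $w$ together with its ancestors, and a directed path from $u$ to $w$ exists if and only if $u$ is an ancestor of $w$ (equivalently, $u$ lies on the trace of $w$): concatenating the trace of $u$ with a directed path from $u$ to $w$ yields a walk from $x_{init}$ to $w$ that must be simple — a repeated vertex other than $u$ would force a cycle, contradicting that $v_x$ is a tree — hence it is the unique trace of $w$. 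So the whole lemma comes down to showing: for every committed version $x_i$ of $x$ there is a directed path from $x_i$ to $x_{max}$ in $v_x$.

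Next I would dispatch the installed versions. Let $x_i$ be installed. Since $\ll_x$ is a total order over the installed versions of $x$, begins with $x_{init}$, and has maximum $x_{max}$, we have $x_{init} \ll_x x_i \ll_x x_{max}$ (with equality allowed at either end). Applying Axiom~\ref{version-order->version-graph} to $x_i \ll_x x_{max}$ gives a path from $x_i$ to $x_{max}$ in $v_x$, so by the observation above $x_i$ lies on the trace of $x_{max}$. (If $x$ has no installed version other than $x_{init}$, then $x_{max} = x_{init}$ and there are no committed versions besides $x_{init}$, which is trivially on its own trace.)

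Finally I would extend to intermediate committed versions. Suppose $x_i$ is written by a committed transaction $T$ but is not $T$'s final version of $x$; let $x_f$ be $T$'s final version of $x$, which $T$ therefore installs. The writes of $T$ to $x$, taken in transaction order, are edges of $v_x$ that chain together into a single directed path running through $x_i$ and ending at $x_f$; hence there is a directed path from $x_i$ to $x_f$. Since $x_f$ is installed, the previous paragraph puts $x_f$ on the trace of $x_{max}$, and the observation puts $x_i$ on the trace of $x_f$, which is a prefix of the trace of $x_{max}$; so $x_i$ is on the trace of $x_{max}$. I expect the main obstacle to be precisely this last step — justifying that $T$'s writes to $x$ form a \emph{connected} directed path in $v_x$, i.e. that the pre-version of each write to $x$ is the post-version of the preceding write to $x$ within the same transaction. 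The stated "observes its own writes" constraint on the event order $E$ is phrased for reads following writes; I would need either the analogous clause for a write following a write, or a short argument that it is forced by the reading of object operations as sequential state transitions within a transaction. Once that is pinned down, the remaining bookkeeping (ancestors versus traces, handling $x_i = x_{init}$) is routine.
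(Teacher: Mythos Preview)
Your core argument---use the totality of $\ll_x$ plus Axiom~\ref{version-order->version-graph} to get a path from any installed $x_i$ to $x_{max}$, then use traceability to conclude $x_i$ lies on the trace of $x_{max}$---is exactly the paper's. The paper's own proof is a single sentence (``This follows from axiom~\ref{version-order->version-graph}, and the fact that the version order $\ll$ is total over installed versions''), so on that part you are simply spelling out what the paper leaves implicit.

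Where you go beyond the paper is in treating \emph{intermediate} committed versions separately. The paper's definition of ``committed version'' explicitly includes intermediate writes of committed transactions, yet its one-line justification only invokes $\ll$, which is defined solely on installed versions. Your third paragraph fills that gap by chaining a transaction's successive writes to $x$ into a path ending at its installed final version. You are also right that this step needs the pre-version of each write to $x$ to equal the post-version of the transaction's previous write to $x$; the paper states the analogous rule only for reads-after-writes, so this is a genuine (if easily patched) omission in the formalism, not in your argument. In short: your approach matches the paper's, but you have noticed and addressed a case the paper's proof silently skips.
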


This follows from axiom \ref{version-order->version-graph}, and the fact that the version order $\ll$ is total over installed versions.

\begin{lemma}
\label{x_i->ll_x-prefix}

If $x_i$ is an installed version of a traceable object $x$, removing intermediate and aborted versions from the versions in the trace of $x_i$ yields the prefix of $\ll_x$ up to and including $x_i$.
\end{lemma}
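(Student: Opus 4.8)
The plan is to work directly with the tree structure of $v_x$. Since $x$ is traceable, $v_x$ is a tree, so the trace of $x_i$ is \emph{the} unique path from $x_{init}$ to $x_i$, and the versions appearing on it are exactly $x_i$ together with its ancestors in $v_x$ (with $x_{init}$ at the root); existence of this path is what lets us speak of ``the trace of $x_i$'' in the first place, and it is the same fact underlying Lemma~\ref{traceable->x_max-trace-all-committed}. Given this, it suffices to establish (i) that the installed versions lying on the trace of $x_i$ are precisely the versions in the prefix of $\ll_x$ up to and including $x_i$, and (ii) that they occur along the trace in $\ll_x$-order. Since a history is complete, every transaction commits or aborts, so every version on the trace is either installed, an intermediate version of a committed transaction, or a version of an aborted transaction ($x_{init}$ counting as installed, as it belongs to $\ll_x$); deleting the intermediate and aborted versions therefore leaves exactly the installed versions on the trace, and by (i) and (ii) these form the claimed prefix, in order.

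First I would prove the ``$\supseteq$'' direction of (i) together with (ii). Write the prefix of $\ll_x$ up to $x_i$ as $x_{init} = y_0 \ll_x y_1 \ll_x \cdots \ll_x y_k = x_i$; each $y_\ell$ is installed, since $\ll_x$ is a total order over installed versions. By Axiom~\ref{version-order->version-graph}, for every $\ell < k$ there is a path from $y_\ell$ to $y_{\ell+1}$ in $v_x$, and in a tree this forces $y_\ell$ to be an ancestor of $y_{\ell+1}$. Transitivity then gives that every $y_\ell$ is an ancestor of $x_i$, hence lies on the trace of $x_i$, and that $y_0, y_1, \dots, y_k$ appear on the trace in exactly that order --- which is (ii) --- while each $y_\ell$ being installed gives ``$\supseteq$''.

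The step I expect to be the crux is the ``$\subseteq$'' direction of (i): that no \emph{other} installed version lies on the trace. Suppose $x_m$ is an installed version on the trace of $x_i$, so there is a path from $x_m$ to $x_i$ in $v_x$. Both $x_m$ and $x_i$ are installed, so they are $\ll_x$-comparable. If $x_i \ll_x x_m$ and $x_m \ne x_i$, then Axiom~\ref{version-order->version-graph} also yields a path from $x_i$ to $x_m$; concatenated with the path from $x_m$ to $x_i$, this is a cycle in $v_x$, contradicting that $v_x$ is a tree. Hence $x_m = x_i$ or $x_m \ll_x x_i$, so $x_m$ lies in the prefix, i.e.\ $x_m = y_\ell$ for some $\ell \le k$. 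Combining this with the previous paragraph, the installed versions on the trace of $x_i$ are exactly $\{y_0, \dots, y_k\}$, and by (ii) they occur along the trace in the order $y_0, y_1, \dots, y_k$; therefore deleting the intermediate and aborted versions from the trace leaves precisely $y_0, y_1, \dots, y_k$, the prefix of $\ll_x$ up to and including $x_i$, as required. This ``path-plus-totality-forbids-a-cycle'' argument is essentially the localized form of the reasoning behind Lemma~\ref{traceable->x_max-trace-all-committed}.
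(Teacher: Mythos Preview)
Your argument is correct. The key ingredients---Axiom~\ref{version-order->version-graph}, the arborescence structure of $v_x$ implied by traceability, and the totality of $\ll_x$ over installed versions---are exactly those the paper uses, but you organize them differently. The paper proceeds by induction along $\ll_x$: it verifies the base case $x_{init}$, then at each step takes the unique $v_x$-path from $x_i$ to its $\ll_x$-successor $x_j$ and argues (via Axiom~\ref{version-order->version-graph}) that no installed version can appear strictly between them, so stripping intermediate and aborted versions from that segment leaves just $[x_i,x_j]$. You instead give a direct, global argument: show every element of the $\ll_x$-prefix is an ancestor of $x_i$ (hence on its trace, in order), and conversely that any installed version on the trace must lie in the prefix, using the ``path in both directions forces a cycle'' observation. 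The paper's induction localizes the no-extra-installed-versions claim to each gap $[x_i,x_j]$; your $\subseteq$ direction is the same claim stated once and for all. Your decomposition into (i) set equality and (ii) order is arguably cleaner to verify, while the paper's induction mirrors how one would actually compute the prefix incrementally; neither buys anything the other lacks in strength.
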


\begin{proof}
Consider the case where $x_i$ is $x_{init}$. Then the trace of $x_i$ is the singleton path $[x_{init}]$, and the prefix of $\ll$ up to and including $x_i$ is also the chain $[x_{init}]$.

Now imagine we have a prefix $p_i$ of $\ll_x$ which is equal to the versions in the trace of some $x_i$ without intermediate or aborted versions---i.e. which satisfies this lemma. Let $x_j$ be the next element after $x_i$ in $\ll_x$, assuming $x_i$ is not the largest element of $\ll_x$. By axiom \ref{version-order->version-graph}, there exists a path from $x_i \xrightarrow{} x_j$ in $v_x$, and since $x$ is traceable, that path is unique. That path cannot contain any installed versions (other than $x_i$ and $x_j$) without contradicting axiom \ref{version-order->version-graph}. Removing intermediate and aborted versions from the path therefore yields $[x_i, x_j]$, and we know that removing intermediate and aborted versions from the trace of $x_i$ yields $p_i$. Therefore, removing intermediate and aborted versions from the trace of $x_j$ yields the prefix of $\ll_x$ up to and including $x_j$.

By induction, this property holds for all installed $x_j$.
\end{proof}

The version order for $x$, $\ll_x$, provides a total order of installed (e.g. non-intermediate and non-aborted) versions of $x$. Per axiom \ref{version-order->version-graph}, $v_x$ has a path which connects those versions, possibly via intermediate and/or aborted versions. 
%Thanks to traceability, that path is unique, and must be $S_x$.

\end{journal}
%\color{black}

Restricting our histories to traceable objects (e.g., lists) will allow us to directly reason about the version order $\ll$ using the results of individual read operations.

%\subsection{Version Orders and Version Graphs}

\subsection{A Theory of Mind for Externally-Observed Histories}

%\paa{some signposting here}
%\paa{on the foundation of traceability we build a TOM for black-box DBs.  we need to define: observations, interpretations, and the notion of recoverability}

When we interact with a database system, the history may not be accessible from outside the database---or perhaps no ``real'' history exists at all. We construct a formal ``theory of mind'' which allows us to reason about potential Adya histories purely from client observations.

We define an \textit{observation} of a system as a set of experimentally-accessible transactions where versions, return values, and committed states may be unknown, and consider the \textit{interpretations} of that observation---the set of all histories which could have resulted in that particular observation.

To be certain that an external observation constitutes an irrefutable proof of an internal isolation anomaly requires that observations have a unique mapping between versions and observed transactions, a notion we call \emph{recoverability}. We provide practical, sufficient conditions to produce recoverable histories.

\subsubsection{Observations}

Imagine a set of single-threaded logical processes which interact with a database system as clients. Each process submits transactions for execution, and may receive information about the return values of the operations in those transactions. What can we tell from the perspective of those client processes?

Recall that an object operation has five components:\\ $f(x, x_i, a) \xrightarrow{} (x_j, r)$ denotes an operation on object $x$ which takes version $x_i$ and combines it with argument $a$ to yield version $x_j$, returning $r$. When a client makes a write, it knows the object $x$ and argument $a$, but (likely) not the versions $x_i$ or $x_j$. If the transaction commits, the client may know the return value $r$, but might not if, for example, a response message is lost by the network.

We define an \textit{observed object operation} as an operation whose versions and return value may be unknown. We write observed operations with a hat: $\hat{w}(x, \_, 3) \xrightarrow{} (\_, nil)$ denotes an observed write of 3 to object $x$, returning $nil$; the versions involved are unknown.
An \textit{observed operation} is either an observed object operation, a commit, or an abort.

An \textit{observed transaction}, written $\hat{T_i}$, is a transaction composed of observed operations. If a client attempts to abort, or does not attempt to commit, a transaction, the observed transaction ends in an abort. If a transaction is known to have committed, it ends in a commit operation. However, when a client \textit{attempts} to commit a transaction, but the result is unknown, e.g. due to a timeout or database crash, we leave the transaction with neither a commit nor abort operation.

An \textit{Observation} $O$ represents the experimentally-accessible information about a database system's behavior. Observations have a set of observed transactions $\hat{T}$. We assume observations include every transaction executed by a database system.
We say that O is \textit{determinate} if every transaction in $\hat{T}$ is either committed or aborted; e.g. there are no indeterminate transactions. Otherwise, $O$ is \textit{indeterminate}.

Consider the set $X_c$ of versions of a traceable object $x$ read by committed transactions in some observation $O$. We denote a single version with a trace longer than any other $x_{longest}$---if there are multiple longest traces, any will do. We say $O$ is \textit{consistent} if for all $x$ in $O$, every $x_i \in X_c$ appears in the trace of $x_{longest}$. Otherwise, $O$ is \textit{inconsistent}. We will find $x_{longest}$ helpful in inferring as much of $\ll$ as possible.

\subsubsection{Interpretations}

Intuitively, an observed operation $\hat{o}$ could be a witness to an ``abstract'' operation $o$ if the two execute the same type of operation on the same key with the same argument, and their return values and versions don't conflict. We capture this correspondence in the notion of \textit{compatibility}.

Consider an operation $o = f(x_i, a) \xrightarrow{} (x_j, r)$ and an observed operation $\hat{o} = \hat{f}(\hat{x_i}, \hat{a}) \xrightarrow{} (\hat{x}_j, \hat{r})$. We say that $o$ is \textit{compatible with} $\hat{o}$ iff:

\begin{itemize}
    \setlength\itemsep{0em}
    \item $\hat{f} = f$
    \item $\hat{a} = a$
    \item $\hat{r}$ is either \textit{unknown} or equal to $r$
    \item $\hat{x_i}$ is either \textit{unknown} or equal to $x_i$
    \item $\hat{x}_j$ is either \textit{unknown} or equal to $x_j$
\end{itemize}

We may now define a notion of compatibility among transactions that builds upon object compatibility. Consider an abstract transaction $T_i$ and an observed transaction $\hat{T_i}$. We say that $T_i$ is \textit{compatible with} $\hat{T_i}$ iff:

\begin{itemize}
    \setlength\itemsep{0em}
    \item They have the same number of object operations.
    \item Every object operation in $T_i$ is compatible with its corresponding object operation in $\hat{T_i}$.
    \item If $T_i$ committed, $\hat{T_i}$ is not aborted, and if $\hat{T_i}$ committed, $T_i$ committed too.
    \item If $T_i$ aborted, $\hat{T_i}$ is not committed, and if $\hat{T_i}$ aborted, $T_i$ aborted too.
\end{itemize}

Finally, we generalize the notion of compatibility to entire histories and observations. Consider a history $H$ and an observation $O$, with transaction sets $T$ and $\hat{T}$ respectively. We say that $H$ is \textit{compatible with} $O$ iff there exists a one-to-one mapping $R$ from $\hat{T}$ to $T$ such that $\forall (\hat{T_i}, T_i) \in R, T_i \textrm{ is compatible with } \hat{T_i}$. We call any $(H, R)$ which satisfies this constraint an \textit{interpretation} of $O$. Given an interpretation, we say that $T_i = R \hat{T}_i$ is the \textit{corresponding transaction} to $\hat{T_i}$, and vice versa.

There may be many histories compatible with a given observation. For instance, an indeterminate observed transaction may either commit or abort in a compatible history. Given two increment transactions $T_1$ and $T_2$ with identical operations, there are two choices of $R$ for any history $H$, corresponding to the two possible orders of $T_1$ and $T_2$. There may also be many observations compatible with a given history: for instance, we could observe transaction $T_1$'s commit, or fail to observe it and label $T_1$ indeterminate.

In each interpretation of an observation, every observed transaction corresponds to a distinct abstract transaction in that interpretation's history, taking into account that we may not know exactly what versions or return values were involved, or whether or not observed transactions actually committed. These definitions of \textit{compatible} formalize an intuitive "theory of mind" for a database: what we think could be going on behind the scenes.

\subsubsection{Recoverability}

Traceability allows us to derive version dependencies, but in order to infer transaction dependencies, we need a way to map between versions and observed transactions. We also need a way to identify aborted and intermediate versions, which means proving which particular \textit{write} in a transaction yielded some version(s). To do this, we exploit the definition of reads, and a property relating versions to observed writes, which we call \textit{recoverability}.

The definition of a read requires that the pre-version, post-version, and return value are all equal. This means for an observed committed read, we know exactly what version it observed---and conversely, given a version, we know which reads definitely observed it.\footnote{Indeterminate reads, of course, may have read different values in different interpretations.} We say an observed transaction $\hat{T}_i$ read $x_i$ when $x_i$ is returned in one of $\hat{T}_i$'s reads. By compatibility,
every corresponding transaction $T_i$ must \textit{also} have read $x_i$.

Writes are more difficult, because in general multiple writes could have resulted in a given version. For example, consider two observed increment operations $\hat{w}_1 = w(x, \_, 1) \xrightarrow{} (\_, nil)$ and $\hat{w}_2 = w(x, \_, 1) \xrightarrow{} (\_, nil)$. Which of these writes resulted in, say, the version 2? It could be \textit{either} $\hat{w}_1$ or $\hat{w}_2$. We cannot construct a single transaction dependency graph for this observation. We could construct a (potentially exponentially large) \textit{set} of dependency graphs, and check each one for cycles, but this seems expensive. To keep our analysis computationally tractable, we restrict ourselves to those cases where we can infer a single write, as follows.

Given an observation $O$ and an object $x$ with some version $x_i$, we say that $x_i$ is \textit{recoverable} iff there is exactly one write $\hat{w}_i$ in $O$ which is compatible with any write leading to $x_i$ in the version graph $v_x$. We call $\hat{w}_i$ \textit{recoverable} as well, and say that $x_i$ must have been written by $\hat{w}_i$. Since there is only one $\hat{w}_i$, there is exactly one transaction $\hat{T}_i$ in $O$ which performed $\hat{w}_i$.

Thanks to compatibility, any interpretation of $O$ has exactly one $w_i$ compatible with $\hat{w}_i$, again performed by a unique transaction $T_i$. When a version is recoverable, we know which single transaction performed it in \textit{every} interpretation.

We say a version $x_i$ is \textit{known-aborted} if it is recoverable to an aborted transaction, \textit{known-committed} if it is recoverable to a committed transaction, and \textit{known-intermediate} if it is recovered to a non-final write. By compatibility, these properties apply not just to an observation $O$, but to every interpretation of $O$.

We say an observation $O$ is \textit{completely recoverable} if every write in $O$ is recoverable. $O$ is \textit{intermediate-recoverable} if every intermediate write in $O$ is recoverable. $O$ is \textit{trace-recoverable} if, for every $x$ in $O$, $x$ is traceable, and every non-initial version in the trace of every committed read of $x$ is recoverable.

We can obtain complete recoverability for a register by choosing unique arguments for writes. Counters and sets are difficult to recover in general: a set like $\{1,2\}$ could have resulted either from a write of 1 or 2.\footnote{We can define a weaker notion of recoverability which identifies all writes in the causal past of some version, but we lack space to discuss it here.} However, restricting observations to a single write per object makes recovery trivial.

\begin{sigmod}

For traceable objects, we can guarantee an observation $O$ is trace-recoverable when $O$ satisfies three criteria:

\begin{enumerate}
    \item Every argument in the observed writes to some object is distinct.
    \item Given a committed read of $x_i$, every argument to every write in the trace of $x_i$ is distinct.
    \item Given a committed read of $x_i$, every write in the trace of $x_i$ has a compatible write in $O$.
\end{enumerate}
\end{sigmod}

\begin{journal}
\journo{}
\begin{lemma}
\label{lem:recoverability}
An observation $O$ is recoverable if all of the following hold for every object $x$ in $O$:

\begin{enumerate}
    \item Every argument in the observed writes to $x$ is distinct.
    \item Every argument in the traces of every known-committed version of $x$ is distinct.
    \item Every write in the trace of every known-committed version of $x$ has a compatible write in $O$.
\end{enumerate}
\end{lemma}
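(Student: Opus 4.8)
The plan is to unfold the definitions of \emph{recoverable}, \emph{compatible}, and \emph{trace}, and to argue that conditions (1)--(3) are precisely the ingredients needed to pin down a unique observed write behind each version of interest. First I would fix an object $x$ in $O$; since $x$ is traceable (as it is for the datatypes we care about, such as list append), its version graph $v_x$ is a tree, so every version has a unique trace and every non-initial version has a unique incoming edge. I would then fix a version $x_i$ returned by some committed read in $O$, pick any non-initial version $x_j$ on the (now unique) trace $x_{init} \to \cdots \to x_j \to \cdots \to x_i$ of $x_i$, and let $w$ be the unique $v_x$-edge into $x_j$ --- the ``write leading to $x_j$''. Note that $w$ is one of the edges of the trace of $x_i$. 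The goal then reduces to showing that exactly one observed write in $O$ is compatible with $w$.

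The key observations are: (a) an observed write $\hat{w}$ is compatible with $w$ iff it is a write on the same object $x$, its argument equals $w$'s argument, and any recorded pre-/post-versions agree with $w$'s --- and when the client records no versions this collapses to ``a write on $x$ carrying $w$'s argument'', with recorded versions only tightening the constraint; (b) existence of a compatible observed write follows from condition~(3), since $w$ sits in the trace of a committed read; and (c) uniqueness follows from condition~(1), since any two observed writes compatible with $w$ are writes on $x$ with the same argument, and (1) forbids two distinct such writes. Putting (a)--(c) together shows $x_j$ is recoverable, recovered to the single transaction performing that compatible write. Since $x_j$ and the committed read $x_i$ were arbitrary --- and $x_i$ itself, if non-initial, is the last vertex of its own trace and so is also covered --- this gives that every non-initial version in the trace of every committed read of $x$ is recoverable; quantifying over all $x$ in $O$ yields trace-recoverability, which is the content of the lemma. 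Condition~(2) I would use to argue that the recovery is \emph{coherent}: distinct versions on a common trace have distinct incoming arguments, hence recover to distinct observed writes, matching the fact that one observed write installs one version in any interpretation.

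The main obstacle I anticipate is the compatibility bookkeeping rather than any deep argument. I would need to check carefully that, under traceability, ``compatible with any write leading to $x_i$'' really does collapse to compatibility with the single edge $w$; that unknown versions make the version clauses of \emph{compatible} vacuous while known versions can only help, never creating spurious matches; and that condition~(3)'s phrase ``every write in the trace of $x_i$'' is read as including the incoming edge of every vertex of that trace, so that it actually supplies the write I need for each $x_j$. A secondary point worth being careful about is reconciling ``committed read of $x_i$'' (as in the definition of trace-recoverability) with the ``known-committed version'' phrasing of conditions (2)--(3) --- using, if needed, consistency of $O$ to place a version read by a committed transaction on the trace of $x_{longest}$, and treating the conditions as stated for at least the traces that trace-recoverability concerns.
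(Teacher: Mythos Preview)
Your proposal is correct and follows essentially the same route as the paper: fix a non-initial version on the trace of a committed read, use traceability to isolate the unique incoming write, invoke condition~(3) for existence of a compatible observed write, and condition~(1) for uniqueness. The paper's proof is terser and, notably, does not explicitly invoke condition~(2) at all---your ``coherence'' use of it is an addition beyond what the lemma's statement strictly requires, though it is a reasonable sanity check.
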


\begin{proof}
Consider a version $x_i \neq x_{init}$ in the trace of some version $x_c$ read by a committed transaction in $O$. To show trace-recoverability, we must prove $x_i$ has exactly one $(\hat{T}_i, \hat{w}_i)$ in $M_x$, the recoverability map.

Since $x$ is traceable, and $x_i$ is not the initial state, there is exactly one write $w_i = w(x, \_, a) \xrightarrow{} (x_i, r)$ which yields $x_i$. By (3), a compatible write $\hat{w}_i$ exists in $O$, and by (1), there is only one $\hat{T}_i$ which performed $\hat{w}_i$. Since $\hat{w}_i$ is compatible with $w_i$ (the only write leading to $x_i$ in $v_x$), $(\hat{T}_i, \hat{w}_i)$ is the corresponding transaction and write to $x_i$.
\end{proof}

\end{journal}
%\color{black}

We can ensure the first criterion by picking unique values when we write to the database. We can easily detect violations of the remaining two criteria, and each points to pathological database behavior: if arguments in traces are not distinct, it implies some write was somehow applied \textit{twice}; and if a trace's write has no compatible write in $O$, then it must have manifested from nowhere.

Similar conditions suffice for intermediate-recoverability.

With a model for client observations, interpretations of those observations, and ways to map between versions and observed operations, we are ready to infer the presence of anomalies.

\subsection{Soundness of Elle}

We would like our checker to be \emph{sound}: if it reports an anomaly in an observation, that anomaly should exist in every interpretation of that observation. We would also like it to be \emph{complete}: if an anomaly occurred in an history, we should be able to find it in any observation of that history. In this section, we establish the soundness of \Name{} formally, and show how our approach comes \emph{close} to guaranteeing completeness.

The anomalies identified by Adya et al. can be broadly split into two classes. Some anomalies involve transactions operating on versions that they should not have observed, either because an aborted transaction wrote them or because they were not the final write of some committed transaction. Our soundness proof must show that if one of these anomalies is detected in an observation, it surely occurred in every interpretation of that observation. Others involve a cycle in the dependency graph between transactions; we show that given an observation, we can can construct a dependency graph which is a \emph{subgraph} of every possible history compatible with that observation. If we witness a cycle in the subgraph, it surely exists in any compatible history.

We begin with the first class: non-cycle anomalies.

\subsubsection{Non-Cycle Anomalies}
\label{sec:dirty}

We can use the definition of compatibility, along with properties of traceable objects and recoverability, to infer whether or not an observation implies that every interpretation of that observation contains aborted reads, intermediate reads, or dirty updates.

\begin{sigmod}

%\paragraph{Direct Observation}
\textit{Direct Observation}
\hspace{.1cm}
Consider an observation $O$ with a known-aborted version $x_i$. If $x_i$ is read by an observed committed transaction $\hat{T}_i$, that read must correspond to a committed read of an aborted version in every interpretation of $O$: an aborted read. A similar argument holds for intermediate reads.

\textit{Inconsistent Observations}
\hspace{.2cm}
For traceable objects, we can go further. If an observation $O$ is inconsistent, it contains a committed read of some version $x_i$ which does not appear in the trace of $x_{longest}$. As previously discussed, all committed versions of $x$ must be in the trace of $x_{max}$. At most one of $x_{longest}$ or $x_i$ may be in this trace, so at least one of them must be aborted.

\textit{Via Traces}
\hspace{.2cm}
Consider a committed read of some value $x_c$ whose trace contains a known-aborted version $x_a$. Either $x_c$ is aborted (an aborted read), or a dirty update exists between $x_a$ and $x_c$. An similar argument allows us to identify dirty updates when $x_c$ is the product of a known-committed write. The closer $x_a$ and $x_c$ are in the version graph, the better we can localize the anomaly.

\textit{Completeness}
\hspace{.2cm}
The more recoverable a history is, and the fewer indeterminate transactions it holds, the more non-cycle anomalies we can catch. If an observation is determinate and trace-recoverable, we know exactly which reads committed in every interpretation, and exactly which writes aborted, allowing us to identify every case of aborted read. Finding every dirty update requires complete recoverability.

For an intermediate-recoverable observation $O$, we can identify every intermediate read. We can do the same if $O$ is trace-recoverable. Let $x_i$ be a version read by a committed read in $O$. Trace-recoverability ensures $x_i$ is recoverable to a particular write, and we know from that write's position in its observed transaction whether it was intermediate or not. Compatibility ensures all interpretations agree.

In practice, observations are rarely complete, but as we show in section \ref{sec:implementation}, we typically observe \textit{enough} of a history to detect the presence of non-cycle anomalies.
\end{sigmod}

\begin{journal}
\journo{}

\begin{lemma}
\label{inconsistent->aborted-read}
If an observation $O$ is inconsistent, every interpretation of $O$ exhibits aborted reads.
\end{lemma}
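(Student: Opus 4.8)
\emph{Proof plan.}
The statement is a ``for every interpretation'' claim, so the plan is to fix an arbitrary interpretation $(H, R)$ of $O$ and exhibit an aborted read inside $H$. Unwinding the definition of inconsistency, there is a traceable object $x$ and a version $x_i \in X_c$ that does not appear in the trace of $x_{longest}$; note that $x_{longest}$ itself lies in $X_c$, so both $x_i$ and $x_{longest}$ are returned by reads of committed observed transactions of $O$, say $\hat{T}_a$ and $\hat{T}_b$. The first step is to push these two reads into $H$: by compatibility the corresponding transactions $T_a = R\hat{T}_a$ and $T_b = R\hat{T}_b$ are committed, and since a read has pre-version, post-version, and return value all equal, $T_a$ reads $x_i$ and $T_b$ reads $x_{longest}$ in $H$. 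Hence it suffices to prove that at least one of $x_i$, $x_{longest}$ is written by an aborted transaction in $H$: reading such a version from a committed transaction is precisely an aborted read (G1a), and as $(H,R)$ was arbitrary this finishes the lemma.

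The core of the argument is to show that $x_i$ and $x_{longest}$ cannot both be committed versions of $x$ in $H$. Suppose they were. Since $x$ is traceable, $v_x$ is a tree, so the trace of $x_{max}$ is a single simple path $P$ from $x_{init}$ to $x_{max}$, and the versions lying on $P$ are linearly ordered by trace-inclusion (each one's trace is a prefix of $P$). By Lemma~\ref{traceable->x_max-trace-all-committed} every committed version of $x$ lies on $P$, so $x_i$ and $x_{longest}$ would both sit on $P$ and therefore be comparable. If $x_i$ comes at or before $x_{longest}$ on $P$, then the trace of $x_i$ is a prefix of the trace of $x_{longest}$, so $x_i$ appears in the trace of $x_{longest}$, contradicting inconsistency. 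If instead $x_{longest}$ strictly precedes $x_i$ on $P$, then $x_i$'s trace is strictly longer than $x_{longest}$'s; but $x_i \in X_c$ and $x_{longest}$ was chosen to have a trace at least as long as that of any version in $X_c$, a contradiction. Either way we reach a contradiction, so not both of $x_i$, $x_{longest}$ are committed versions.

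To close, I would invoke completeness of $H$: a complete history has no indeterminate transactions, so every version of $x$ occurring in $H$ is either the (committed) initial version, a committed version, or a version written by an aborted transaction. The one among $x_i$, $x_{longest}$ that is \emph{not} a committed version is therefore written by an aborted transaction, and it is read by the committed transaction $T_a$ or $T_b$ --- an aborted read. I expect the only delicate point to be the two-case split in the middle paragraph: one has to see why we need \emph{both} the inconsistency of $O$ (which rules out $x_i$ sitting at or below $x_{longest}$ on $P$) and the maximality of $x_{longest}$'s trace (which rules out the reverse). Everything else --- transferring committed reads across the compatibility map, the tree structure of $v_x$, and the ``uncommitted $\Rightarrow$ aborted'' step via completeness --- is a routine unwinding of the definitions together with Lemma~\ref{traceable->x_max-trace-all-committed}.
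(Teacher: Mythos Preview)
Your proof is correct and follows essentially the same route as the paper: both argue that $x_i$ and $x_{longest}$ cannot simultaneously be committed versions (else Lemma~\ref{traceable->x_max-trace-all-committed} places both on the trace of $x_{max}$, forcing a comparison that contradicts either inconsistency or the maximality of $x_{longest}$), so one of the two committed reads observes an aborted version. You are simply more explicit than the paper about the bookkeeping---fixing the interpretation, transferring the committed reads across $R$ via compatibility, and invoking completeness of $H$ to get ``not committed $\Rightarrow$ aborted''---but the mathematical content is the same.
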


\begin{proof}
If $O$ is inconsistent, then there exists a committed read $r_i$ of some version $x_i$ which is not in the trace of $x_{longest}$. If $x_i$ is aborted, then $r_i$ is an aborted read by definition. Likewise, if $x_{longest}$ is not committed, then our committed read of $x_{longest}$ is an aborted read.

If, on the other hand, both $x_i$ and $x_{longest}$ are committed, then consider lemma \ref{traceable->x_max-trace-all-committed}, which states that every committed version of $x$ must be in the trace of $x_{max}$. This means both $x_i$ and $x_{longest}$ are in the trace of $x_{max}$.

Since $x_i$ is not in the trace of $x_{longest}$, $x_{longest}$ must be in the trace of $x_i$---but this would imply $x_i$ is both observed by a committed read, and has a longer trace than $x_{longest}$, which contradicts the definition of $x_{longest}$. Therefore $x_i$ must be aborted, and $r_i$ is an aborted read.
\end{proof}

\begin{lemma}
\label{traceable+recoverable+aborted-write-immediately-before-read->aborted-read}
If a traceable, recoverable observation $O$ has a committed read of version $x_i$, which, (by recoverability) was the product of a write $\hat{w}_i$, and the transaction $\hat{T}_i$ containing $\hat{w}_i$ is aborted, then every interpretation of $O$ exhibits aborted read.
\end{lemma}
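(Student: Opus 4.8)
The plan is to show that in every interpretation, the committed read of $x_i$ is a read of a version written by an aborted transaction, which is exactly Adya's G1a. First I would fix an arbitrary interpretation $(H, R)$ of $O$. Since $O$ has a committed read of $x_i$, there is an observed transaction $\hat{T}_r$ whose read operation returned $x_i$ and which is (observed) committed; by the definition of compatibility, the corresponding transaction $T_r = R\hat{T}_r$ in $H$ is also committed and also contains a read operation compatible with that observed read. Because reads have pre-version, post-version, and return value all equal, and because $x_i$ is a concrete version (not unknown) in the observation, compatibility forces $T_r$'s read to return precisely $x_i$ as well. So in $H$, a committed transaction reads $x_i$.

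Next I would use recoverability to pin down who wrote $x_i$. By hypothesis $x_i$ is recoverable, witnessed by the unique observed write $\hat{w}_i$ in the (aborted) observed transaction $\hat{T}_i$. By the discussion of recoverability in the excerpt, in \emph{every} interpretation there is exactly one abstract write $w_i$ compatible with $\hat{w}_i$, performed by the unique corresponding transaction $T_i = R\hat{T}_i$, and $w_i$ is the write in $v_x$ that yields $x_i$ — i.e. $T_i$ is the transaction that wrote $x_i$ in $H$. Since $\hat{T}_i$ is observed aborted, the compatibility condition on transactions ("if $\hat{T_i}$ aborted, $T_i$ aborted too") gives that $T_i$ is aborted in $H$.

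Finally I would assemble these facts: in $H$ we have a committed transaction $T_r$ reading version $x_i$, and $x_i$ was written by the aborted transaction $T_i$ — which is Adya's definition of an aborted read (G1a). I should also dispatch the degenerate possibility that $x_i = x_{init}$: the initial version is considered committed and is not written by any transaction, so it cannot be recoverable to an aborted write, and the hypothesis rules this case out. Since the interpretation was arbitrary, every interpretation of $O$ exhibits an aborted read, completing the proof.

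The main obstacle, and the step that needs the most care, is the second paragraph: correctly invoking recoverability to conclude that $T_i$ is the \emph{unique} writer of $x_i$ in \emph{every} interpretation. This leans on traceability (so that $v_x$ has a unique write leading to $x_i$) together with the uniqueness clause in the definition of a recoverable version (exactly one compatible observed write), and it must be combined with the transaction-compatibility clauses to transport the "aborted" label from the observation into $H$. Everything else is a direct unfolding of the definitions of compatibility and of Adya's G1a.
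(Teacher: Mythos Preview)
Your proposal is correct and follows essentially the same approach as the paper's own proof: fix an interpretation, use compatibility to transport the committed read of $x_i$ into $H$, use traceability plus recoverability to identify the unique writer $T_i$ of $x_i$, and use transaction compatibility to conclude $T_i$ is aborted. Your version is more explicit (naming the reader $T_r$, justifying why the read value is exactly $x_i$, and dispatching the $x_{init}$ corner case), but the structure and key ideas are the same.
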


\begin{proof}
Consider an interpretation $(H, R)$ of $O$. By compatibility with $O$, $H$ contains a committed read of $x_i$. By traceability, there is exactly one write $w_i$, in transaction $T_i$, which produced $x_i$. Recoverability implies that there is exactly one transaction ($\hat{T}_i$) in $O$ which is compatible with $T_i$. Since $\hat{T}_i$ aborted, $T_i$ aborted as well. $x_i$ is therefore the product of an aborted write, which satisfies the definition of an aborted read.
\end{proof}

\begin{lemma}
\label{traceable+recoverable+aborted-write-before-committed-write-before-read->dirty-update}
If a traceable, recoverable observation $O$ has a committed read of some version $x_k$ with at least two versions $x_i$ and $x_j$ in the trace of $x_k$, in that order, and recoverability implies that the transaction $\hat{T}_i$ which wrote $x_i$ aborted, and the transaction $\hat{T}_j$ committed, then every interpretation of $O$ exhibits dirty update.
\end{lemma}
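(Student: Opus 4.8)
The plan is to fix an arbitrary interpretation $(H,R)$ of $O$ and exhibit, inside $H$, an uncommitted transaction that writes some version $y$ together with a committed transaction containing a write that acts on $y$ --- i.e., a dirty update; since $(H,R)$ is arbitrary, this proves the claim.

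First I would transport the hypotheses into $H$. Because $O$ has a committed read of $x_k$, transaction compatibility yields a committed transaction in $H$ that reads $x_k$, whence ``no read from the future'' places a write with post-version $x_k$ in $H$; since $x$ is traceable ($v_x$ is a tree) this is the unique $v_x$-edge into $x_k$, with pre-version $\mathrm{parent}(x_k)$. Recoverability of $O$ (Lemma~\ref{lem:recoverability}) gives, for every non-initial version $x_a$ on the trace of $x_k$, a unique observed write $\hat{w}_a$ in a unique observed transaction $\hat{T}_a$ compatible with the (unique) $v_x$-edge into $x_a$; through the bijection $R$, the corresponding transaction $T_a = R\,\hat{T}_a$ contains a write $w_a$ with the same argument, and $T_a$'s commit status is forced to match $\hat{T}_a$'s whenever $\hat{T}_a$ is determinate. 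I would then establish a helper claim --- in the spirit of Lemmas~\ref{traceable->x_max-trace-all-committed} and \ref{x_i->ll_x-prefix} --- that these $w_a$ in fact reconstruct the trace inside $H$: since $x_k$ appears in $H$ and $x$ is traceable in $H$, the unique path from $x_{init}$ to $x_k$ in $H$'s version graph is exactly the trace $x_{init} = y_0 \to \cdots \to y_n = x_k$, and each edge of it, being the unique write into its target, coincides with the recovered write $w_{y_m}$ in $T_{y_m}$ ($m \ge 1$).

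Given this reconstruction, the conclusion follows from a short walk along the sub-path of the trace from $x_i$ to $x_j$ that locates the boundary between uncommitted and committed producers. By hypothesis $\hat{T}_i$ aborted, so $T_i$ is uncommitted and writes $x_i$; $\hat{T}_j$ committed, so $T_j$ is committed and writes $x_j$; and $x_i$ strictly precedes $x_j$ on the trace, both being non-initial. Let $y_\ell$ be the \emph{last} version on the $x_i$--$x_j$ sub-path whose producing transaction in $H$ is uncommitted: the set of such versions contains $x_i$, so it is nonempty, and it excludes $x_j$, so $y_{\ell+1}$ exists on the sub-path and, by maximality of $\ell$, $T_{y_{\ell+1}}$ is committed. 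Since $y_\ell$ and $y_{\ell+1}$ are consecutive on the trace, $\mathrm{parent}(y_{\ell+1}) = y_\ell$, so the write $w_{y_{\ell+1}}$ --- which lies in the committed transaction $T_{y_{\ell+1}}$ --- acts on $y_\ell$, while $y_\ell$ is itself written by the uncommitted transaction $T_{y_\ell}$: exactly a dirty update. I expect the helper claim of the previous paragraph to be the main obstacle: compatibility with the observation does not by itself fix the pre- and post-versions of the $w_a$ in $H$ (observed writes carry no version information), so traceability, recoverability, and the event-order axioms must be combined with care to force the recovered writes to realize the trace of $x_k$; once that bridge is in place, the boundary walk and the remaining bookkeeping are routine.
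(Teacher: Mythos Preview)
Your proposal is correct and follows the same core idea the paper intends: transport the abort of $\hat{T}_i$ and the commit of $\hat{T}_j$ into an arbitrary interpretation $H$, then walk the segment of the trace from $x_i$ to $x_j$ until you reach the first version with a committed producer, whose write then acts on an uncommitted predecessor---a dirty update. Your treatment is in fact more careful than the paper's, whose printed proof is terse to the point of error (it opens with a case split that is vacuous given the hypothesis that $\hat{T}_j$ committed, refers to a ``committed read of $x_j$'' where $x_k$ is meant, and closes by invoking the very lemma being proved); you have correctly isolated the real work in the ``helper claim'' that the recovered writes actually realize the trace inside $H$, a bridging step the paper simply takes for granted.
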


\begin{proof}
Consider an interpretation $(H, R)$ of $O$. If $x_j$ is aborted in $H$, then the committed read of $x_j$ in $O$ implies a corresponding committed read of $x_j$ in $H$: an aborted read.

If $x_j$ is committed in $H$, then there must exist some write $w_j$ which produced $x_j$ and committed in $H$. Since $w_i$ aborted in $O$ (and therefore in $H$ as well), $w_i \neq w_j$, and $w_i$ must precede $w_j$ in the trace of $x_j$. Lemma \ref{traceable+recoverable+aborted-write-before-committed-write-before-read->dirty-update} tells us that $H$ must contain a dirty update.
\end{proof}

\begin{lemma}
\label{traceable+recoverable+aborted-write-before-read->dirty-update-or-aborted-read}
If a traceable, recoverable observation $O$ has a committed read of some $x_j$, and some write $w_i$ in the trace of $x_j$ belongs (via recoverability) to an aborted transaction in $O$, then every interpretation of $O$ exhibits either an aborted read or a dirty update.
\end{lemma}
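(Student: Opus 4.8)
The plan is to fix an arbitrary interpretation $(H,R)$ of $O$ and argue that $H$ contains an aborted read or a dirty update, splitting on whether the version $x_j$ is committed in $H$. Recoverability does the setup: since $w_i$ is a recoverable write of the aborted observed transaction $\hat{T}_i$, its corresponding abstract write $R(\hat{w}_i)$ lies in the aborted transaction $R(\hat{T}_i)$ of $H$, so the version $x_i$ it produces is an uncommitted version of $x$ in $H$; note $x_i \neq x_{init}$, since the initial version is produced by no write. By traceability the trace of $x_j$ is a single fixed path $x_{init},\ldots,x_i,\ldots,x_j$ in $v_x$, and by trace-recoverability every version along it is recoverable, hence realized by a uniquely determined transaction in $H$ whose committed/aborted status matches its status in $O$.

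If $x_j$ is not committed in $H$, we are done immediately: the observed committed read of $x_j$ corresponds, by compatibility, to a committed read of $x_j$ in $H$, and since $x_j$ was written by a transaction that did not commit, this is an aborted read by definition. This case includes $x_i = x_j$, which forces $x_j$ uncommitted, so it subsumes Lemma~\ref{traceable+recoverable+aborted-write-immediately-before-read->aborted-read}.

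Otherwise $x_j$ is committed in $H$, so $x_i \neq x_j$ and $x_i$ strictly precedes $x_j$ along the trace. Walk the sub-path from $x_i$ to $x_j$: it starts at an uncommitted version and ends at a committed one, so let $x_q$ be the first committed version after $x_i$ and let $x_p$ be its immediate predecessor on the trace; then $x_p$ is uncommitted, and the edge $x_p \to x_q$ is a write that, because $x_q$ is committed, belongs to a committed transaction of $H$ and acts on the uncommitted version $x_p$ --- exactly a dirty update. When $x_q$ may be taken to be $x_j$ this degenerates to Lemma~\ref{traceable+recoverable+aborted-write-before-committed-write-before-read->dirty-update}; the general walk is what handles an interpretation that stacks several uncommitted versions above $x_i$.

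I expect the bookkeeping of the last case to be the main obstacle: one must justify that the trace of $x_j$ is genuinely one fixed path across all interpretations (this is where traceability, not merely recoverability, is essential), that trace-recoverability pins every version on it to a definite transaction of $H$ with a fixed status, and that the ``status flip'' along the path really does exhibit an adjacent uncommitted--committed pair whose connecting write matches the dirty-update definition --- in particular, that $x_p$ is the pre-version of that write. Beyond that, everything is a direct appeal to compatibility and to the definitions of aborted read and dirty update.
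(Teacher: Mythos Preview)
Your proposal is correct and follows essentially the same approach as the paper: fix an interpretation, use recoverability and compatibility to place an aborted $x_i$ and a committed read of $x_j$ in $H$, and then walk the unique trace from $x_i$ to $x_j$ to locate a write that takes an uncommitted version to a committed one. You are simply more explicit than the paper about the case split on whether $x_j$ itself is committed in $H$ and about isolating the adjacent uncommitted--committed pair; the paper compresses this into a single sentence.
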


\begin{proof}
Analogous to lemma \ref{traceable+recoverable+aborted-write-immediately-before-read->aborted-read}, every interpretation $(H, R)$ of $O$ contains an aborted version $x_i$ and a committed version $x_j$, such that a path exists between $x_i$ and $x_j$ in $v_x$. By traceability, this path is unique. Therefore there must exist some write between $x_i$ and $x_j$ which takes an uncommitted version to a committed one: a dirty update.
\end{proof}

Traceability and recoverability allow us to map observed transactions to the traces of committed reads. If we observe an aborted write whose consequences are reflected in a committed read, we know that either an aborted read or a dirty update has taken place (or that no interpretations exist at all). If we know the committed or aborted status of the transaction which produced the value we read, or we can find a committed write between the two, we can distinguish between dirty updates and aborted reads.

\textit{Detecting Intermediate Reads}
\hspace{.2cm}

Given a history $H$, Adya et al define an intermediate read (G1b) as a committed transaction $T_j$ which reads a version of object $x$ written by a transaction $T_i$, which was not $T_i$'s final modification of $x$. Given an observation $O$, can we say whether all interpretations $(H, R)$ of $O$ exhibit intermediate reads? In general, we cannot: an observation may not tell us what actual \textit{versions} were produced by a given write. However, for traceable objects, we can infer from a read exactly which updates produced which versions. Recoverability, in turn, lets us identify whether those updates were intermediate or final in their respective transactions.

\begin{lemma}
\label{intermediate-read}
Given a recoverable observation $O$ of a database with a traceable object $x$, if $O$ contains a committed transaction $\hat{T}_j$ which reads $x_i$, and the trace of $x_i$ implies, via recoverability, that $x_i$ was the product of an intermediate write in some observed transaction $\hat{T}_i$, then every interpretation $(H, R)$ of $O$ contains an intermediate read.
\end{lemma}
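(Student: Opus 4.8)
The plan is to fix an arbitrary interpretation $(H,R)$ of $O$ and exhibit an explicit G1b violation in it: the committed transaction $R\hat{T}_j$ reads a version that is only an intermediate (non-final) version written by $R\hat{T}_i$. The skeleton is the same as for the aborted-read lemmas above --- push the observed read into $H$, use traceability together with recoverability to pin down the single write that produced the read version, then transfer a property of that write from $O$ back into $H$ via compatibility --- except that the property being transferred is now ``this write is not $\hat{T}_i$'s last write to $x$'' rather than ``the transaction aborted''.

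First I would move the read into $H$: since $\hat{T}_j$ is known-committed, transaction compatibility forces $T_j := R\hat{T}_j$ to be committed, and the observed read of $x_i$ in $\hat{T}_j$ corresponds position-for-position to an operation of $T_j$ that is a read with the same empty argument and return value $x_i$; by the abstract definition of a read, whose pre-version, post-version and return value coincide, that operation observes exactly $x_i$. Next I would identify the write in $H$ that produced $x_i$. Because $x_i$ is read in $H$, the ``no reads from the future'' constraint on $E$ forces some write $w \to (x_i)$ to occur in $H$; since $x$ is traceable, $v_x$ is a tree rooted at $x_{init}$, and $x_i$ (a write product, hence not $x_{init}$) has a unique incoming edge, so $w$ is forced to be the unique write $w_i$ producing $x_i$, its pre-version and argument being determined by that edge. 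As versions do not repeat in a history, $w_i$ belongs to a single transaction $T_i$ of $H$. Invoking recoverability --- $x_i$ lies in the trace of the committed read in $\hat{T}_j$, so it is recoverable --- there is exactly one observed write $\hat{w}_i$ in $O$ compatible with $w_i$, performed by the unique observed transaction $\hat{T}_i$; and since $H$ is compatible with $O$ via $R$, the observed transaction corresponding to $T_i$ must contain an observed write compatible with $w_i$, which by uniqueness is $\hat{w}_i$, so $T_i = R\hat{T}_i$.

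To finish I would transfer intermediacy back into $H$: by hypothesis $\hat{w}_i$ is not $\hat{T}_i$'s final write to $x$, so $\hat{T}_i$ contains a later write to $x$, and transaction compatibility gives $T_i = R\hat{T}_i$ a matching later write to $x$ whose product differs from $x_i$ (versions do not repeat); hence $x_i$ is not $T_i$'s final version of $x$, so the committed transaction $T_j$ reading $x_i$ is a witness to G1b, and $H$ contains an intermediate read. Since $(H,R)$ was arbitrary, every interpretation of $O$ contains one. I expect the genuinely delicate step to be the middle one --- showing that the write occurring in $H$ which yields $x_i$ must be exactly the recovered write $\hat{w}_i$, equivalently that $T_i = R\hat{T}_i$ --- because it is there that traceability (a unique incoming edge to $x_i$), the no-repeated-versions assumption, the read-from-the-future constraint on $E$, recoverability (a unique compatible observed write), and the compatibility bijection $R$ all must be chained together in the right direction. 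Everything before and after is routine unfolding of the definitions of compatibility and of reads.
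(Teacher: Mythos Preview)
Your argument is correct and follows the same line as the paper's proof, only with the compatibility/traceability/recoverability chain spelled out in much greater detail than the paper's three-sentence version. The one point the paper makes explicit that you leave implicit is $T_i \neq T_j$: Adya's G1b requires the reader and writer to be distinct transactions, and the paper notes that this follows from recoverability together with $R$ being a bijection (so $\hat{T}_i \neq \hat{T}_j$ forces $R\hat{T}_i \neq R\hat{T}_j$); you should add that one line before declaring $T_j$ a G1b witness.
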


\begin{proof}
Since $\hat{T}_j$ is committed, its corresponding transaction $T_j$ in every $H$ must be committed too. Likewise, $T_i$'s write of $x_i$ must not be $T_i$'s final write of $x$, in order to be compatible with $\hat{T}_i$. By recoverability, and the fact that $R$ is bijective, $T_i \neq T_j$. Therefore, every $H$ has an intermediate read.
\end{proof}

Collectively, these lemmata establish \Name{}'s soundness in identifying non-cycle anomalies:

\begin{lemma}
\label{non-cycle-sound}
Given a traceable, recoverable observation $O$, if \Name{} infers aborted reads, dirty updates, or intermediate reads, then every interpretation of $O$ exhibits corresponding phenomena.
\end{lemma}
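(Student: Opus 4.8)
The plan is to prove this lemma as a straightforward case analysis over the ways in which \Name{} can flag a non-cycle anomaly on a traceable, recoverable observation, appealing in each case to one of the lemmas already established in this subsection. Concretely, \Name{}'s non-cycle detection procedure inspects $O$ and fires one of a fixed set of rules; I would enumerate those rules and match each to its justifying lemma, observing that each rule's firing condition is precisely the hypothesis of the corresponding lemma and that each lemma's conclusion is exactly ``every interpretation exhibits the phenomenon.''

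First I would handle aborted reads. \Name{} reports an aborted read in two situations: (i) $O$ is inconsistent --- some committed read observes a version not in the trace of $x_{longest}$ --- in which case Lemma~\ref{inconsistent->aborted-read} gives the conclusion directly; and (ii) a committed transaction reads a version $x_i$ which recoverability maps to a write in an aborted transaction, in which case Lemma~\ref{traceable+recoverable+aborted-write-immediately-before-read->aborted-read} applies. Next I would handle dirty updates and the mixed case: when a committed read observes a version whose trace contains a known-aborted version followed by a known-committed version, Lemma~\ref{traceable+recoverable+aborted-write-before-committed-write-before-read->dirty-update} yields a dirty update in every interpretation; when \Name{} only knows that \emph{some} write in the trace of a committed read belongs to an aborted transaction (without a witnessed committed write in between, and without the read value itself being known-aborted), Lemma~\ref{traceable+recoverable+aborted-write-before-read->dirty-update-or-aborted-read} guarantees every interpretation exhibits an aborted read or a dirty update, which is exactly the disjunctive phenomenon \Name{} reports there. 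Finally, for intermediate reads: \Name{} reports one when a committed transaction reads a version that recoverability maps to a non-final write of an observed transaction, and Lemma~\ref{intermediate-read} closes this case. Since every interpretation's history is, by construction, compatible with $O$, and each invoked lemma already quantifies over all interpretations, the claim follows; and if no interpretation exists the statement is vacuous, which several of the lemmas already note.

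The main obstacle I anticipate is not any individual deduction --- each case is a one-line citation --- but rather arguing \emph{exhaustiveness}: that the list of inference rules above is a complete description of how \Name{} flags non-cycle anomalies on a traceable, recoverable observation, so that nothing \Name{} reports escapes the case analysis. This requires pinning down \Name{}'s detection algorithm precisely enough to see that each rule's trigger coincides with the hypothesis of the cited lemma --- in particular, that ``known-aborted,'' ``known-committed,'' and ``known-intermediate,'' as computed from $O$'s recoverability map, are exactly the notions those lemmas use, and that inconsistency of $O$ is detected exactly via the $x_{longest}$ test. Given the definitions of recoverability, traceability, and consistency already in place, this is a bookkeeping exercise rather than a conceptual one, so I would keep the proof short and let the prior lemmas carry the weight.
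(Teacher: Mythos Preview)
Your proposal is correct and matches the paper's approach: the paper states this lemma immediately after the sequence of lemmas you cite, with the one-line justification ``Collectively, these lemmata establish \Name{}'s soundness in identifying non-cycle anomalies,'' and offers no further proof. Your case analysis is a faithful (and more explicit) unpacking of that same collection, with the exhaustiveness caveat you flag being the only thing the paper leaves implicit.
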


We leave completeness (which requires additional constraints on O) as an exercise to the reader.

\end{journal}
%\color{black}

\subsubsection{Dependency Cycles}
\label{sec:cycles}

The remainder of the anomalies identified by Adya et al. are defined in terms of cycles in the dependency graph between transactions. Given an observation $O$, we begin by inferring constraints on the version order $\ll$, then use properties of reads and recoverability to map dependencies on \textit{versions} into dependencies on \textit{transactions}.

\textit{Inferred Version Orders}
\hspace{.2cm}
Consider a intermediate-recoverable observation $O$ of a database composed of traceable objects\footnote{We can also derive weaker constraints on the version order from non-traceable objects, which we leave as an exercise for the reader.}, and an interpretation $(H, R)$ of $O$. We wish to show that we can derive part of $H$'s version order $\ll$ from $O$ alone, with minimal constraints on $H$ and $R$. Traceability allows us to recover a prefix of $\ll_x$ from any installed $x_i$ in $H$, assuming we know which transactions in the trace of $x_i$ committed, and which aborted. Let us \textit{assume} $H$ does not contain aborted reads, intermediate reads, or dirty updates. We call such a history \textit{clean}.

Given $O$, which version of $x$ should we use to recover $\ll_x$? Ideally, we would have $x_{max}$. However, there could be multiple interpretations of $O$ with \textit{distinct} $x_{max}$. Instead, we take a version $x_f$ read by a transaction $\hat{T}_f$ such that:

\begin{itemize}
    \setlength\itemsep{0em}
    \item $\hat{T}_f$ is committed.
    \item $\hat{T}_f$ read $x_f$ before performing any writes to $x$.
    \item No other version of $x$ satisfying the above properties has a longer trace than $x_f$.
\end{itemize}

\begin{sigmod}

We use $x_f$ to obtain an \emph{inferred version order} $<_x$ that is consistent with $\ll_x$, as follows. First, we know that $x_f$ corresponds to an installed version of $x$ in $H$ because $H$ contains no intermediate or aborted reads. By a similar argument, we also know that every version of $x$ in the trace of $x_f$ was written by a committed transaction. Therefore, if we remove the intermediate versions in the trace of $x_f$ (which we know, thanks to intermediate-recoverability), we are left with a total order over committed versions that corresponds directly to the prefix of $\ll_x$ up to and including $x_f$.

We define $<$ as the union of $<_x$ for all objects $x$.

\end{sigmod}

\begin{journal}
\journo{}

\begin{lemma}
\label{x_f-is-installed}
If $O$ is an observation, and $(H, R)$ is an interpretation of $O$ which has neither aborted reads nor intermediate reads, $x_f$ is an installed version in $H$.
\end{lemma}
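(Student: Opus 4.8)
The plan is to transport the defining properties of $\hat{T}_f$ across the compatibility map $R$ into $H$, and then use the absence of aborted and intermediate reads to identify the transaction that produced $x_f$ as a committed one whose final write to $x$ is precisely $x_f$.

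First I would dispose of the degenerate case. If $x_f = x_{init}$, then $x_f$ is the initial version, which is committed by assumption and sits at the head of $\ll_x$, so it is (conventionally) installed and there is nothing to prove. So assume henceforth $x_f \neq x_{init}$. Next, since $\hat{T}_f$ is committed, compatibility forces the corresponding transaction $T_f := R\hat{T}_f$ to be committed in $H$; and since $T_f$ has the same ordered list of object operations as $\hat{T}_f$, $T_f$ reads $x_f$, and it does so before performing any write to $x$. By the event-order constraint forbidding reads from the future, $E$ contains a write $w \xrightarrow{} (x_f)$ preceding this read. Because $x$ is traceable, $v_x$ is a tree, so (as $x_f \neq x_{init}$ and versions do not repeat) $w$ is the unique write in $H$ producing $x_f$; let $T_i$ be the transaction containing $w$. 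I then claim $T_i \neq T_f$: the event order preserves intra-transaction order, so if $w$ were an operation of $T_f$ it would precede $T_f$'s read of $x_f$ within $T_f$, contradicting that $T_f$ reads $x_f$ before any write to $x$.

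Finally I would invoke the ``clean'' hypotheses. $T_f$ is a committed transaction reading $x_f$, which was written by the distinct transaction $T_i$. If $T_i$ were aborted, this read would constitute an aborted read (G1a) in $H$, contrary to hypothesis; hence $T_i$ is committed. If $w$ were not $T_i$'s final write to $x$---i.e., $x_f$ were an intermediate version of $T_i$---then $T_f$'s read of $x_f$ would be an intermediate read (G1b) in $H$, again contrary to hypothesis; hence $w$ is $T_i$'s final write to $x$. Therefore $x_f$ is the final version of $x$ produced by the committed transaction $T_i$, which is exactly the definition of an installed version.

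The step requiring the most care---rather than any genuine difficulty---is ruling out $T_i = T_f$, since that is the only place where the ``reads $x_f$ before writing to $x$'' clause in the definition of $x_f$ is used; without it a transaction could in principle read a version it later produces, and the argument would break. The maximality clause (no qualifying version has a longer trace) plays no role in this lemma; it becomes relevant only for the subsequent claims that the prefix of $\ll_x$ recovered from $x_f$ is as long as possible and that distinct interpretations of $O$ recover mutually consistent prefixes.
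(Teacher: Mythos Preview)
Your proof is correct and follows essentially the same approach as the paper's: transport $\hat{T}_f$ to a committed $T_f$ in $H$ via compatibility, then use the absence of aborted and intermediate reads to force the writer of $x_f$ to be a committed transaction whose final write to $x$ is $x_f$. The paper's two-sentence proof simply asserts this last step; you fill in the details it elides---the degenerate case $x_f = x_{init}$, the appeal to traceability for uniqueness of the producing write, and the verification that $T_i \neq T_f$ (which is indeed where the ``reads before writing to $x$'' clause is consumed)---and your closing remark that the longest-trace clause is irrelevant here is also accurate.
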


\begin{proof}
Since $O$ contains a transaction $\hat{T}_f$ reading $x_f$, there must be a compatible $T_f$ in $H$ which is also committed and read $x_f$ before performing any writes to $x$. Since $T_f$ is committed, and $H$ has neither intermediate nor aborted reads, we know that $x_f$ must have been the result of a final write by some committed transaction in $H$.
\end{proof}

\begin{lemma}
\label{x_f-trace-is-all-committed}
If $O$ is an observation, and $(H, R)$ is a clean interpretation of $O$, then every version in the trace of $x_f$ was written by a committed transaction in $H$.
\end{lemma}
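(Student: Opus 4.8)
The plan is a proof by contradiction that uses exactly one clause of the \emph{cleanliness} hypothesis: the absence of \emph{dirty updates}. Since $x$ is traceable, the trace of $x_f$ is the unique path $x_{init} = x_0 \to x_1 \to \cdots \to x_n = x_f$ in $v_x$, and every non-initial version that occurs in $H$ is the post-version of exactly one write of $H$; hence it suffices to show that for each $m$ with $1 \le m \le n$, the write producing $x_m$ belongs to a committed transaction (the version $x_{init}$ is committed by convention). The endpoint is already in hand: a clean interpretation has neither aborted reads nor intermediate reads, so Lemma~\ref{x_f-is-installed} applies and $x_f$ is an installed, hence committed, version of $x$ in $H$.

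The heart of the argument: suppose, toward a contradiction, that some non-initial version $x_k$ on the trace is written by a transaction of $H$ that does not commit; since $H$ is a complete history, that transaction is \emph{aborted}, so $x_k$ is an uncommitted version, and $x_k \neq x_f$ because $x_f$ is committed. Moving forward along the trace from $x_k$, let $x_b$ be the first committed version encountered --- such a version exists because $x_f$ itself is committed and lies past $x_k$ --- and let $x_a$ be its immediate predecessor on the trace (which exists since $x_b \neq x_{init}$). By minimality of $x_b$, together with the fact that $x_{init}$ is committed, $x_a$ is an uncommitted version, written by some aborted transaction $T_a$. The unique write edge $x_a \to x_b$ of $v_x$ is performed by a transaction $T_b$ of $H$; since this write produces the committed version $x_b$, $T_b$ is committed, and in particular $T_a \neq T_b$. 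Thus $H$ contains an aborted transaction $T_a$ that writes $x_a$ and a committed transaction $T_b$ that performs a write acting on $x_a$ --- precisely a dirty update --- contradicting cleanliness. The lemma follows.

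I expect the only real subtleties to be bookkeeping. First, the lemma is meaningful only if the trace of $x_f$ is actually realized inside $H$ rather than being a purely datatype-level artifact; this rests on the standard well-formedness of histories --- each non-initial version occurring in $H$ arises as the post-version of some write of $H$ --- combined with traceability, which makes that write unique. Second, one must line the witness up exactly with the definition of dirty update: the predecessor $x_a$ of the first committed version must be shown genuinely \emph{uncommitted} (not merely ``not yet seen to be committed'') and different from $x_{init}$, which is where minimality of $x_b$ is used; once $x_a$ is uncommitted, $T_a$ is aborted and hence automatically distinct from the committed $T_b$, ruling out the degenerate reading in which $x_a \to x_b$ is just an intermediate write inside one committed transaction. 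Everything else is immediate from the definitions of installed, committed, and uncommitted versions and from Lemma~\ref{x_f-is-installed}.
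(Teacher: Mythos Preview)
Your proposal is correct and follows essentially the same approach as the paper: use Lemma~\ref{x_f-is-installed} to get that $x_f$ is committed, then argue by contradiction that an uncommitted version on the trace would force a dirty update. The paper compresses the contradiction into a single sentence (``there must exist a committed transaction which performed a write on top of uncommitted state''), whereas you explicitly construct the boundary point $x_a \to x_b$ where the trace passes from uncommitted to committed; this is exactly the detail the paper's sketch elides, and your bookkeeping (minimality of $x_b$, $T_a \neq T_b$) is sound.
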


\begin{proof}
Some of the transactions in the trace of $x_f$ might be indeterminate in $O$. However, we know from lemma \ref{x_f-is-installed} that $x_f$ is an installed (and therefore committed) version. If an uncommitted write \textit{were} in the trace of $x_f$, then there must exist a committed transaction which performed a write on top of uncommitted state. This would constitute a dirty update: a contradiction. Therefore, every version in the trace of $x_f$ was written by a committed transaction in $H$.
\end{proof}

\begin{lemma}
\label{obs->ll-prefix}
If $O$ is an observation, and $(H, R)$ is a clean interpretation of $O$, then for every object $x$ in $O$, removing intermediate versions from the trace of $x_f$ yields $<_x$: the prefix chain of $\ll_x$ up to and including $x_f$.
\end{lemma}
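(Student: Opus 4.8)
The plan is to obtain the claimed equality by composing two lemmas already in hand, using the \emph{clean} hypothesis on $(H,R)$ in two distinct ways. Lemma~\ref{x_i->ll_x-prefix} already tells us that removing the intermediate \emph{and aborted} versions from the trace of an installed version produces a prefix of $\ll_x$; since this lemma only permits us to remove intermediate versions, essentially all that remains to be shown is that the trace of $x_f$ contains nothing aborted, so the two deletions coincide.

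First I would fix a clean interpretation $(H,R)$ of $O$ and an object $x$ occurring in $O$, and unpack the defining properties of the witness version $x_f$: it is read, before any write to $x$, by a committed observed transaction $\hat T_f$, and its trace is of maximal length among versions with that property (the maximality is not actually needed for this lemma, only that such an $x_f$ exists). If no version of $x$ qualifies, there is no order to infer and the statement is vacuous, so I would dispose of that case at once. Because $\hat T_f$ is committed and reads $x_f$, compatibility of $H$ with $O$ forces a corresponding committed transaction $T_f$ in $H$ that also reads $x_f$, so $x_f$ is genuinely a version of $x$ in $H$, and by traceability its trace is the unique path from $x_{init}$ to $x_f$ in $v_x$.

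Next I would invoke Lemma~\ref{x_f-is-installed}: cleanness implies, in particular, that $H$ has neither aborted reads nor intermediate reads, hence $x_f$ is an installed version of $x$ in $H$. Being installed and traceable, $x_f$ meets the hypotheses of Lemma~\ref{x_i->ll_x-prefix}, which gives that removing intermediate and aborted versions from the trace of $x_f$ yields exactly the prefix of $\ll_x$ up to and including $x_f$ --- i.e., $<_x$. Finally I would apply Lemma~\ref{x_f-trace-is-all-committed}, which uses the remaining clause of cleanness (no dirty updates) to conclude that every version in the trace of $x_f$ was written by a committed transaction of $H$, and is therefore not an aborted version. Consequently, deleting only the intermediate versions from the trace of $x_f$ produces the same chain as deleting the intermediate and aborted versions, and that chain is $<_x$, as desired.

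I do not expect a genuine obstacle here: the substantive content resides in the supporting lemmas, and the argument is mostly bookkeeping. The only real care-point is keeping the two uses of \emph{clean} separated --- ``no aborted or intermediate reads'' is what makes $x_f$ installed (Lemma~\ref{x_f-is-installed}), while ``no dirty updates'' is what makes its trace all-committed (Lemma~\ref{x_f-trace-is-all-committed}) --- together with noting the trivial edge cases $x_f = x_{init}$, already handled by the base case of Lemma~\ref{x_i->ll_x-prefix}, and the absence of any qualifying $x_f$.
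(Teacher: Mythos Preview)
Your proposal is correct and follows essentially the same approach as the paper: invoke Lemma~\ref{x_f-is-installed} to get that $x_f$ is installed, invoke Lemma~\ref{x_i->ll_x-prefix} to get that removing intermediate and aborted versions from its trace yields the desired prefix of $\ll_x$, and invoke Lemma~\ref{x_f-trace-is-all-committed} to conclude there are no aborted versions to remove. The paper's version is terser and does not separate out the vacuous cases or the two uses of cleanness, but the argument is the same.
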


\begin{proof}
Lemma \ref{x_f-is-installed} tells us that $x_f$ is an installed version in $H$, and lemma \ref{x_f-trace-is-all-committed} tells us that every version in the trace of $x_f$ was written by a committed transaction in $H$. Lemma \ref{x_i->ll_x-prefix} states that removing intermediate and aborted versions from the trace of $x_f$ yields the prefix of $\ll_x$ up to and including $x_f$. We know this trace has no aborted versions, so all that must be done is to remove intermediate versions.
\end{proof}

Note that this lemma holds regardless of our specific choice of $(H, R)$. Given any observation, we can reconstruct a prefix of $\ll_x$ which holds for \textit{all} clean interpretations.

Next, we define an \textit{inferred version order} $<$ by combining the prefix chains from lemma \ref{obs->ll-prefix}: $< \: \equiv \: <_x \cup <_y \cup \ldots$, where $<_x \cup <_y$ means the chain whose members are the union of members in $<_x$ and $<_y$, and whose order is the union of orders in $<_x$ and $<_y$.

\end{journal}
%\color{black}

\textit{Inferred Serialization Graphs}
\hspace{.2cm}
Given an intermediate-recoverable observation $O$ of a database of traceable objects, we can infer a chain of versions $<_x$ which is a prefix of $\ll_x$, for every object $x$ in $O$. If $O$ is trace-recoverable, we can map every version in $<$ to a particular write in $O$ which produced it, such that the corresponding write in every interpretation of $O$ produced that same version. Using these relationships, we define \textit{inferred dependencies} between pairs of transactions $\hat{T}_i$ and $\hat{T}_j$ in $O$ as follows:

\begin{itemize}
    \setlength\itemsep{0em}
    \item \textbf{Direct inferred write-write dependency}. A transaction $\hat{T}_j$ \textit{directly inferred-ww-depends} on $\hat{T}_i$ if $\hat{T}_i$ performs a final write of a version $x_i$ of $x$, and $\hat{T}_j$ performs a final write resulting in $x$'s next version $x_j$, by $<$.
    
    \item \textbf{Direct inferred write-read dependency}. A transaction $\hat{T}_j$ \textit{directly inferred-wr-depends} on $\hat{T}_i$ if $\hat{T}_i$ performs a final write of a version $x_i$ in $<$, and $\hat{T}_j$ reads $x_i$.
    
    \item \textbf{Direct inferred read-write dependency}. A transaction $\hat{T}_j$ \textit{directly inferred-rw-depends} on $\hat{T}_i$ if $\hat{T}_i$ reads version $x_i$ of $x$, and $\hat{T}_j$ performs a final write of $x$'s next version in $<$.
\end{itemize}

Unlike Adya et al's definitions, we don't require that a transaction \textit{install} some $x_i$, because an indeterminate transaction in $O$ might be committed in \textit{interpretations} of $O$, and have corresponding dependency edges there. Instead, we rely on the fact that $<$ only relates installed versions (in clean interpretations).

An \textit{Inferred Direct Serialization Graph}, or \textit{IDSG}, is a graph of dependencies between observed transactions. The IDSG for an observation $O$ is denoted IDSG($O$). If $\hat{T}_j$ inferred-ww-depends on $\hat{T}_i$, there exists an edge labeled \textit{ww} from $\hat{T}_i$ to $\hat{T}_j$ in IDSG($O$), and similarly for inferred-wr and inferred-rw-dependencies.

All that remains is to show that for every clean interpretation $(H, R)$ of an observation, IDSG($O$) is (in some sense) a subgraph of DSG($H$). However, the IDSG and DSG are graphs over different types of transactions; we need the bijection $R$ to translate between them. Given a relation $R$ and a graph $G$, we write $R \diamond G$ to denote $G$ with each vertex $v$ replaced by $Rv$.

\begin{sigmod}
The soundness proof for \Name{} first establishes that
for every clean interpretation $(H, R)$ of a trace-recoverable observation $O$, $R \diamond IDSG(O)$ is a subgraph of $DSG(H)$.
The proof proceeds by cases showing that for each class of dependency, if a given edge exists in the IDSG, it surely exists in every compatible DSG.  We omit these details, which are straightforward, in this paper.

\end{sigmod}

\begin{journal}
\journo{}
\begin{lemma}
\label{IDSG-subgraph-of-DSG}
For every clean interpretation $(H, R)$ of an observation $O$, $R \diamond IDSG(O)$ is a subgraph of $DSG(H)$.
\end{lemma}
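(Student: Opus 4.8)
The plan is to fix an arbitrary clean interpretation $(H,R)$ of the trace-recoverable observation $O$ and prove the containment edge by edge. Since $R$ is a bijection, it suffices to show that for each of the three dependency types --- ww, wr, rw --- if $\hat{T}_j$ inferred-depends on $\hat{T}_i$ in $IDSG(O)$ with that label, then the edge from $R\hat{T}_i$ to $R\hat{T}_j$ carrying the same label belongs to $DSG(H)$. So the whole proof is a three-way case split; as the paper notes the individual cases are routine, so the substance is a short list of ``translation'' facts, established once and reused everywhere.

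The facts I would assemble first are these. (i) By Lemma~\ref{obs->ll-prefix}, which applies precisely because $(H,R)$ is clean, for every object $x$ the chain $<_x$ is the prefix of $\ll_x$ up to and including $x_f$; in particular $<_x$ is an initial segment of $\ll_x$, so versions adjacent in $<_x$ are adjacent in $\ll_x$, and every version appearing in $<$ is an installed version of $x$ in $H$. (ii) Every non-initial version $x_i$ appearing in $<$ lies in the trace of the committed read $x_f$, so trace-recoverability makes $x_i$ recoverable: there is a unique observed write $\hat{w}_i$, in a unique observed transaction $\hat{T}_i$, compatible with the unique write producing $x_i$ in $v_x$, and since $x_i$ survived the removal of the known-intermediate versions, $\hat{w}_i$ is $\hat{T}_i$'s final write of $x$. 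By compatibility (which matches operations positionally) and injectivity of $R$, the transaction $T_i = R\hat{T}_i$ contains the corresponding write $w_i$, which is also $T_i$'s final write of $x$ and produces $x_i$; since $x_i$ is installed, $T_i$ is committed and installs $x_i$. (iii) If $\hat{T}_j$ reads $x_i$ then, because a read's pre-version, post-version, and return value coincide and the returned version is known, compatibility forces $T_j = R\hat{T}_j$ to read $x_i$ in $H$ as well.

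With (i)--(iii) in hand the cases are short. For \textbf{ww}, an inferred-ww edge comes from a version $x_i$, produced by $\hat{T}_i$'s final write, whose successor in $<_x$ --- call it $x_j$ --- is produced by $\hat{T}_j$'s final write; then (i) and (ii) give that $T_i$ installs $x_i$, $T_j$ installs $x_j$, and $x_i$, $x_j$ are adjacent in $\ll_x$, which is exactly the ww-edge $T_i \to T_j$ of $DSG(H)$. For \textbf{wr}, $\hat{T}_i$'s final write produces some $x_i$ in $<$ and $\hat{T}_j$ reads $x_i$, so (ii) gives $T_i$ installs $x_i$ and (iii) gives $T_j$ reads $x_i$: the wr-edge $T_i \to T_j$. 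For \textbf{rw}, $\hat{T}_i$ reads some $x_i$ in $<_x$ and $\hat{T}_j$'s final write produces its $<_x$-successor $x_j$, so (iii) gives $T_i$ reads $x_i$, (ii) gives $T_j$ installs $x_j$, and $x_j$ is the $\ll_x$-successor of $x_i$: the rw-edge $T_i \to T_j$. (Here $x_i$ may equal $x_{init}$, which needs no recoverability and is harmless, since $\ll_x$ begins with $x_{init}$.)

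I expect the main obstacle to be not any individual case but nailing down fact (ii) carefully: recoverability is a property of the \emph{observation}, so one must argue via compatibility that it is inherited by $H$ --- the same transaction must witness each version in every interpretation --- and one must check that the IDSG's ``performs a final write'' clause really translates to ``installs'' in a clean $H$; this is where cleanness, together with Lemma~\ref{x_f-is-installed} and Lemma~\ref{x_f-trace-is-all-committed}, is used to exclude indeterminate or aborted writers from the trace of $x_f$. Two small points deserve an explicit sentence apiece: a version equal to $x_{init}$ can only ever be the source read of an rw edge (nothing writes $x_{init}$), so it never breaks the argument; and endpoint-distinctness of inferred dependencies should be assumed in parallel with Adya's convention that DSG edges join distinct transactions, whence $R$ (injective) preserves it.
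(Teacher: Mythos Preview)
Your proposal is correct and follows essentially the same approach as the paper's own proof: a three-way case split over ww, wr, and rw edges, using Lemma~\ref{obs->ll-prefix} to lift adjacency in $<_x$ to adjacency in $\ll_x$, recoverability to pin each version to a unique writer, and compatibility to transfer ``performs a final write'' and ``reads $x_i$'' from $O$ to $H$. Your only departure is organizational---you factor the shared reasoning into reusable facts (i)--(iii) before the case split, whereas the paper reproves these inline in each case---and your extra remarks about $x_{init}$ and endpoint-distinctness are sound refinements the paper leaves implicit.
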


\begin{proof}
It suffices to show that every edge in $R \diamond IDSG(O)$ is also in $DSG(H)$. We consider a pair of observed transactions $\hat{T}_i$ and $\hat{T}_j$ in $O$, and their corresponding transactions in $H$: $T_i = R \hat{T}_i$ and $T_j = R \hat{T}_j$.

\begin{itemize}
\item[ww] If $\hat{T}_j$ directly inferred-ww-depends on $\hat{T}_i$, then $T_j$ directly ww-depends on $T_i$.

\begin{proof}
Per the definition of inferred-ww-depends, $\hat{T}_i$ performed a final write of some version $x_a$, and $\hat{T_j}$ performed a final write of some $x_b$, such that $x_a$ immediately precedes $x_b$ in $<$. By lemma \ref{obs->ll-prefix}, $<_x$ is a prefix of $\ll_x$, which means $x_a$ also directly precedes $x_b$ in $\ll$. Since $\ll$ relates installed versions, $x_a$ and $x_b$ are both installed. Recoverability ensures that there is only one write in $O$ which could produce $x_a$ (and $x_b$, respectively). Since $R$ relates compatible transactions, $T_i = R \hat{T}_i$, and likewise for $T_j$. Since $x_a$ directly precedes $x_b$ in $\ll$, and $T_i$ installed $x_a$, and $T_j$ installed $x_b$, $T_j$ directly ww-depends on $T_i$.
\end{proof}

\item[wr] If $\hat{T}_j$ directly inferred-wr-depends on $\hat{T}_i$, then $T_j$ directly wr-depends on $T_i$.

\begin{proof}
By the definition of inferred-wr-depends, $\hat{T}_i$ performs a final write of a version $x_i$ in $<$, and $\hat{T}_j$ reads $x_i$. Since $x_i$ is in $<$, it is committed in $H$. Compatibility and recoverability imply that $T_i$'s write producing $x_i$ was final; $T_i$ therefore installs $x_i$ in $H$. Compatibility implies $T_j$ read $x_i$. Therefore, $T_i$ directly wr-depends on $T_j$.
\end{proof}

\item[rw] If $\hat{T}_j$ directly inferred-rw-depends on $\hat{T}_i$, then $T_j$ directly rw-depends on $T_i$.

\begin{proof}
By the definition of inferred-rw-depends, $\hat{T}_i$ contains a read of $x_i$ and $\hat{T}_j$ performs a final write of $x_j$, such that $x_i$ immediately precedes $x_j$ in $<$. Via $R$, $T_i$ also contains a read of $x_i$. By recoverability, $T_j$ also performs a final write of $x_j$. Since $<$ contains $x_j$, $T_j$ must have committed, which means $T_j$ installed $x_j$. By lemma \ref{obs->ll-prefix}, $<_x$ is a prefix of $\ll_x$, which means $x_i$ directly precedes $x_j$ in $\ll$ as well. Therefore, $T_j$ directly rw-depends on $T_i$.
\end{proof}
\end{itemize}

Since every edge $\hat{T}_i \xrightarrow{} \hat{T}_j$ in $IDSG(O)$ has a corresponding edge $T_i \xrightarrow{} T_j$ in $DSG(H)$, $R \diamond IDSG(O)$ is a subgraph of $DSG(H)$.
\end{proof}

%\paa{I think one more step is needed for soundness, as I wrote above, since IDSG-subgraph-of-DSG only applies to cycle-based anomalies.  we probably also want to say, if we say someone read an aborted or intermediate version, then surely they did}

\subsubsection{Inferred Anomalies}

Lemma~\ref{IDSG-subgraph-of-DSG} tells us that every cycle in IDSG($O$) has a corresponding cycle in every clean interpretation of $O$. 
\end{journal}
%\color{black}

For every anomaly defined in terms of cycles on a DSG (e.g. G0, G1c, G-Single, G2, \ldots), we can now define a corresponding anomaly on an IDSG. If we detect that anomaly in IDSG($O$), its corresponding anomaly must be present in every clean interpretation of $O$ as well!

We present a soundness theorem for \Name{} below:

\newtheorem{theorem}{Theorem}

\begin{theorem}
\label{sound}
Given a trace-recoverable observation $O$, if \Name{} infers aborted reads, dirty updates, or intermediate reads, then every interpretation of $O$ exhibits corresponding phenomena. If \Name{} infers a cycle anomaly, then every clean interpretation of $O$ exhibits corresponding phenomena.
\end{theorem}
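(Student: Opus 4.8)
The plan is to prove the theorem in the same two pieces that organize Section~\ref{sec:formalism}: first the non-cycle anomalies (aborted read, intermediate read, dirty update), and then the cycle anomalies (G0, G1c, G-single, G2, \ldots). Throughout I would lean on the fact that trace-recoverability of $O$ supplies, for every object, both traceability and recoverability of exactly the versions appearing in traces of committed reads --- precisely the hypotheses the supporting arguments need.

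For the non-cycle half I would do a case analysis on which detection rule fired, discharging each case with the corresponding observation from Section~\ref{sec:dirty}. If \Name{} flags an aborted read from a known-aborted version read by an observed committed transaction, compatibility alone closes it: in every interpretation the corresponding read is committed and, by recoverability, its producing write is aborted. If \Name{} flags an aborted read because $O$ is inconsistent, I would argue that at most one of $x_{longest}$ and the offending committed-read version can lie in the trace of $x_{max}$ (since $\ll$ is total over installed versions and consistent with the version graph), so one of them is aborted in every interpretation. If \Name{} flags an anomaly because a committed read's trace contains a known-aborted version $x_a$, then in any interpretation either that read version is itself aborted (an aborted read), or the unique trace from the uncommitted $x_a$ to the committed read version must contain a write promoting uncommitted state to committed state (a dirty update); knowing the committed/aborted status of an intervening write lets \Name{} report the exact variant. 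If \Name{} flags an intermediate read, recoverability pins the read version to a non-final write of some observed transaction and compatibility forces the corresponding transaction in every interpretation to be committed with that write still non-final. Assembling these cases gives the first sentence of the theorem.

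For the cycle half I would first build IDSG($O$) as in Section~\ref{sec:cycles}: choose, for each object $x$, a longest-trace version $x_f$ read by a committed transaction before any write to $x$; strip the known-intermediate versions from the trace of $x_f$ to obtain a chain $<_x$; form the union $<$ of these chains; and read off the inferred ww/wr/rw edges. The pivotal structural claim, already stated in Section~\ref{sec:cycles}, is that for every clean interpretation $(H, R)$ the relabeled graph $R \diamond \mathrm{IDSG}(O)$ is a subgraph of $\mathrm{DSG}(H)$; this is proved per edge type, using that $<_x$ is a prefix of $\ll_x$, that recoverability makes each producing write unique, and that $R$ carries compatible transactions to compatible transactions. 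Granting it, any cycle in IDSG($O$) maps under the bijection $R$ to a closed walk in DSG($H$) on distinct vertices --- hence a genuine cycle --- and since the edge labels are preserved the cycle lands in the same Adya anomaly class (a G0 cycle stays all-ww, a G-single cycle keeps exactly one rw edge, and so on). So whatever cycle anomaly \Name{} reports on IDSG($O$) is present in every clean interpretation, which gives the second sentence.

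I expect the cycle half, and specifically the justification that $<_x$ is a genuine prefix of $\ll_x$, to be the main obstacle: that fact underpins every per-edge check, and establishing it requires threading together that $x_f$ is an installed version (from the absence of aborted and intermediate reads), that its entire trace was written by committed transactions (from the absence of dirty updates), and that the known-intermediate versions removed via intermediate-recoverability are exactly the ones $\ll_x$ omits --- while also checking that the choice of $x_f$, and hence IDSG($O$) itself, does not depend on which interpretation we picked, so that one inferred graph serves all of them. The per-edge subgraph verifications and the transfer of a cycle through $R$ are then routine bookkeeping.
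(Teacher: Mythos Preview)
Your proposal is correct and follows essentially the same route as the paper: the paper's proof of Theorem~\ref{sound} is a two-line appeal to Lemma~\ref{non-cycle-sound} for the first sentence and Lemma~\ref{IDSG-subgraph-of-DSG} for the second, and what you have written is precisely an unfolding of those two lemmas and their supporting results (the detection-rule cases of Section~\ref{sec:dirty} for the non-cycle half, and the $<_x$-is-a-prefix-of-$\ll_x$ chain culminating in the per-edge subgraph check for the cycle half). Your identification of the main obstacle---establishing that $<_x$ is a prefix of $\ll_x$ for every clean interpretation, via the installedness of $x_f$ and the committedness of its trace---matches the paper's Lemmas~\ref{x_f-is-installed}, \ref{x_f-trace-is-all-committed}, and \ref{obs->ll-prefix} exactly.
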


\begin{journal}
\journo{}

\begin{proof}
The first half of this theorem follows directly from Lemma~\ref{non-cycle-sound}. The second follows from Lemma~\ref{IDSG-subgraph-of-DSG}.
\end{proof}
\end{journal}

\textit{Unclean Interpretations}
\hspace{.2cm}
What of unclean interpretations, like those with aborted reads or dirty updates? If those occurred, the trace of a version read by a committed transaction could cause us to infer a version order $<_x$ which includes uncommitted versions, and is not a prefix of $\ll_x$. A clean interpretation could have cycles absent from an unclean interpretation, and vice versa.

Phenomena like aborted reads and dirty updates are, in an informal sense, ``worse'' than dependency cycles like G1c and G2. If every interpretation of an observation must exhibit aborted reads, the question of whether it also exhibits anti-dependency cycles is not as pressing! And if some interpretations exist which \textit{don't} contain aborted reads, but all of those exhibit anti-dependency cycles, we can choose to give the system the benefit of the doubt, and say that it definitely exhibits G2, but may not exhibit aborted reads.

\textbf{Completeness}
\hspace{.2cm}
The more determinate transactions an observation contains, the more likely we are to definitively detect anomalies. In special cases (e.g. when $O$ is determinate, completely-recoverable, etc.), we can prove completeness. In practice, we typically fail to observe the results of some transactions, and must fall back on probabilistic arguments. In section \ref{sec:implementation} we offer experimental evidence that \Name{} is complete enough to detect anomalies in real databases.

\color{black}
%}

\section{Inferring Additional Dependencies}
\label{sec:additional}

We have argued that \Name{} can infer transaction dependencies based on traceability and recoverability. In this section, we suggest additional techniques for inferring the relationships between transactions and versions.

\subsection{Transaction Dependencies}

In addition to dependencies on values, we can infer additional dependencies purely from the concurrency structure of a history. For instance, if process A performs $T_1$ then $T_2$, we can infer that $T_1 <_p T_2$. These dependencies encode a constraint akin to sequential consistency: each process should (independently) observe a logically monotonic view of the database. We can use these dependencies to strengthen any consistency model testable via cycle detection. For instance, Berenson et al’s definition of snapshot isolation~\cite{bernstein-cc} does not require that transaction start timestamps proceed in any particular order, which means that a single process could observe, then un-observe, a write. If we augment the dependency graph with per-process orders, we can identify these anomalies, distinguishing between SI and strong session SI~\cite{strong-session-si}.

Similarly, serializability makes no reference to real-time constraints: it is legal, under Adya's formalism, for every read-only transaction to return an initial, empty state of the database, or to discard every write-only transaction by ordering it after all reads. Strict serializability~\cite{linearizability} enforces a real-time order: if transaction $T_1$ completes before $T_2$ begins, $T_2$ must appear to take effect after $T_1$. We can compute a transitive reduction of the realtime precedence order in $O(n \cdot p)$ time, where $n$ is the number of operations in the history, and $p$ is the number of concurrent processes, and use it to detect additional cycles.

Some snapshot-isolated databases expose transaction start and commit timestamps to clients. Where this information is available, we can use it to infer the time-precedes order used in Adya's formalization of snapshot isolation~\cite{adya-thesis}, and construct a start-ordered serialization graph.

\subsection{Version Dependencies}

Traceability on $x$ allows us to infer a prefix of the version order $<_x$---but this does not mean that non-traceable objects are useless! If we relax $<_x$ to be a partial order, rather than total, and make some small, independent assumptions about the behavior of individual objects, we can recover enough version ordering information to detect cyclic anomalies on less-informative datatypes, such as registers or sets.

For instance, if we assume that the initial version $x_{init}$ is never reachable via any write, we can infer $x_{init} <_x x_i$ for every $x_i$ other than $x_{init}$. With registers, for example, we know that 1, 2, 3, etc. must all follow $nil$. When the number of versions per object is small (or when databases have a habit of incorrectly returning $nil$), this trivial inference can be sufficient to find real-world anomalies.

If we assume that writes follow reads within a single transaction, we can link versions together whenever a transaction reads, then writes, the same key, and that write is recoverable. For instance, $T_1 = r(x_i), w(x_j)$ allows us to infer $x_i <_x x_j$.

Many databases claim that each record is independently linearizable, or sequentially consistent. After computing the process or real-time precedence orders, we can use those transaction relationships to infer \textit{version} dependencies. If a transaction finishes writing or reading a linearizable object $x$ at $x_i$, then another transaction precedes to write or read $x_j$, we can infer (on the basis of per-key linearizability) that $x_i <_x x_j$.

Where databases expose version metadata to clients, we can use that metadata to construct version dependency graphs directly, rather than inferring the version order from values.

Since we can use transaction dependencies to infer version dependencies, and version dependencies to infer transaction dependencies, we can iterate these procedures to infer increasingly complete dependency graphs, up to some fixed point. We can then use the resulting transaction graph to identify anomalies.

\section{Finding Counterexamples}
These techniques allow us to identify several types of dependencies between transactions: write-read, write-write, and read-write relationships on successive versions of a single object, process and real-time orders derived from the concurrency structure of the history, and version and snapshot metadata orders where databases offer them. We take the union of these dependency graphs, with each edge labeled with its dependency relationship(s), and search for cycles with particular properties.

\begin{itemize}
    \item \textbf{G0}: A cycle comprised entirely of write-write edges.
    \item \textbf{G1c}: A cycle comprised of write-write or write-read edges.
    \item \textbf{G-single}: A cycle with exactly one read-write edge.
    \item \textbf{G2}: A cycle with one or more read-write edges.
\end{itemize}

Optionally, we may expand these definitions to allow process, realtime, version, and/or timestamp dependencies to count towards a cycle.

To find a cycle, we apply Tarjan’s algorithm to identify strongly connected components~\cite{tarjan}. Within each graph component, we apply breadth-first search to identify a short cycle.

To find G0 anomalies, we restrict the graph to only write-write edges, which ensures that any cycle we find is purely comprised of write dependencies. For G1c, we select only write-write and write-read edges. G-single is trickier, because it requires exactly one read-write edge. We partition the dependency graph into two subgraphs: one with, and one without read-write edges. We find strongly connected components in the full graph, but for finding a cycle, we begin with a node in the read-write subgraph, follow exactly one read-write edge, then attempt to complete the cycle using only write-write and write-read edges. This allows us to identify cycles with exactly one read-write edge, should one exist.

These cycles can be presented to the user as a witness of an anomaly. We examine the graph edges between each pair of transactions, and use those relationships to construct a human-readable explanation for the cycle, and why it implies a contradiction.

\subsection{Additional Anomalies}

As described in section \ref{sec:dirty}, we can exploit recoverability and traceability to directly detect aborted read, intermediate read, and dirty update. In addition, there are phenomena which Adya et al.’s formalism does not admit, but which we believe (having observed them in real databases) warrant special verification:

\begin{itemize}
    \item \textbf{Garbage reads}: A read observes a value which was never written.
    \item \textbf{Duplicate writes}: The trace of a committed read version contains a write of the same argument multiple times.
    \item \textbf{Internal inconsistency}: A transaction reads some value of an object which is incompatible with its own prior reads and writes.
\end{itemize}

Garbage reads may arise due to client, network, or database corruption, errors in serialization or deserialization, etc. Duplicate writes can occur when a client or database retries an append operation; with registers, duplicate writes can manifest as G1c or G2 anomalies. Internal inconsistencies can be caused by improper isolation, or by optimistic concurrency control which fails to apply a transaction's writes to its local snapshot.

\section{Implementation and Case Study}
\label{sec:implementation}

\begin{figure*}[htb]

\centering
\begin{BVerbatim}
    
Let:
  T1 = {:value [[:append 250 10] [:r 253 [1 3 4]] [:r 255 [2 3 4 5]] [:append 256 3]], ...}
  T2 = {:value [[:append 255 8] [:r 253 [1 3 4]]], ...}
  T3 = {:value [[:append 256 4] [:r 255 [2 3 4 5 8]] [:r 256 [1 2 4]] [:r 253 [1 3 4]]], ...}

Then:
  - T1 < T2, because T1 did not observe T2's append of 8 to 255.
  - T2 < T3, because T3 observed T2's append of 8 to key 255.
  - However, T3 < T1, because T1 appended 3 after T3 appended 4 to 256: a contradiction!
  
\end{BVerbatim}
\caption{A textual explanation of an experimentally obtained real-time G-single cycle, as presented by our checker.}
\label{fig:witness}
\end{figure*}

\begin{figure}[htb!]
\includegraphics[width=\columnwidth]{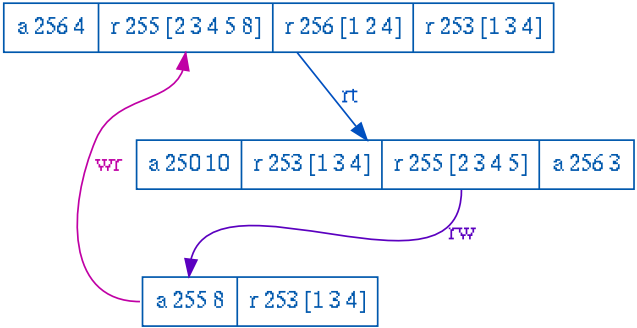}
\caption{The same cycle, as plotted by our checker. Arrows show dependencies between transactions: wr denotes a read dependency, rw denotes an anti-dependency, and rt denotes a real-time ordering.}
\label{fig:witness-plot}
\end{figure}

We have implemented \Name{} as a checker in the open-source distributed systems testing framework \Jepsen{} \cite{jepsen-library} and applied it to four distributed systems, including SQL, document, and graph databases. \Name{} revealed anomalies in every system we tested, including G2, G-single, G1a, lost updates, cyclic version dependencies, and internal inconsistency. Almost all of these anomalies were previously unknown. We have also demonstrated, as a part of \Name{}'s test suite, that \Name{} can identify G0, G1b, and G1c anomalies, as well as anomalies involving real-time and process orders.

\Name{} is straightforward to run against real-world databases. Most transactional databases offer some kind of list with append. The SQL standard's \texttt{CONCAT} function and the \texttt{TEXT} datatype are a natural choice for encoding lists, e.g. as comma-separated strings. Some SQL databases, like \Postgres{}, offer JSON collection types. Document stores typically offer native support for ordered collections. Even systems which only offer registers can \textit{emulate} lists by performing a read followed by a write.

While list-append gives us the most precise inference of anomalies, we can use the inference rules discussed in section \ref{sec:additional} to analyze systems without support for lists. Wide rows in \Cassandra{} and predicates in SQL are a natural fit for sets. Many systems have a notion of an object version number or counter datatype: we can detect cycles in both using \Name{}. Even systems which offer only read-write registers allow us to infer write-read dependencies directly, and version orders can be (partially) inferred by write-follows-read, process, and real-time orders.

In all our tests, we generated transactions of varying length (typically 1-10 operations) comprised of random reads and writes over a handful of objects. We performed anywhere from one to 1024 writes per object; fewer writes per object stresses codepaths involved in the creation of fresh database objects, and more writes per object allows the detection of anomalies over longer time periods.

We ran ~10-30 client threads across 5 to 9 nodes, depending on the particular database under test. When a client thread times out while committing a transaction (as is typical for fault-injection tests) \Jepsen{} spawns a new logical process for that thread to execute. This causes the logical concurrency of tests to rise over time. Tens of thousands of logically concurrent transactions are not uncommon.

Our implementation takes an expected consistency model (e.g. strict-serializable) and automatically detects and reports anomalies as data structures, visualizations, and human-verifiable explanations of each cycle. For example, consider the G-single anomaly in Figures~\ref{fig:witness} and \ref{fig:witness-plot}.

\subsection{TiDB}

\TiDB{}~\cite{tidb} is an SQL database which claims to provide snapshot isolation, based on Google's \Percolator{}~\cite{percolator}. We tested list append with SQL \texttt{CONCAT} over \texttt{TEXT} fields, and found that versions 2.1.7 through 3.0.0-beta.1 exhibited frequent anomalies---even in the absence of faults. For example, we observed the following trio of transactions:

\begin{itemize}
    \setlength\itemsep{0em}
    \item[$T_1$:] \texttt{r(34, [2, 1]), append(36, 5), append(34, 4)}
    \item[$T_2$:] \texttt{append(34, 5)}
    \item[$T_3$:] \texttt{r(34, [2, 1, 5, 4])}
\end{itemize}

$T_1$ did not observe $T_2$'s append of 5 to key 34, so $T_2$ must rw-depend on $T_1$. However, $T_3$'s read implies $T_1$'s append of 4 to key 34 followed $T_2$'s append of 5, so $T_1$ ww-depends on $T_2$. This cycle contains exactly one anti-dependency edge, so it is a case of G-single: read skew. We also found numerous cases of \textit{inconsistent} observations (implying aborted reads) as well as lost updates.

These cases stemmed from an automated transaction retry mechanism: when one transaction conflicted with another, \TiDB{} simply re-applied the transaction's writes again, ignoring the conflict. This feature was enabled by default. Turning it off revealed the existence of a second, undocumented retry mechanism, also enabled by default. Version 3.0.0-rc2 resolved these issues by disabling both retry mechanisms by default.

Furthermore, \TiDB{}'s engineers claimed that \texttt{select ... for update} prevented write skew. \Name{} demonstrated that G2 anomalies including write skew were still possible, even when all reads used \texttt{select ... for update}. \TiDB{}'s locking mechanism could not express a lock on an object which hadn't been created yet, which meant that freshly inserted rows were not subject to concurrency control. \TiDB{} has documented this limitation.

\subsection{YugaByte DB}

\YugaByte{}~\cite{yugabyte} is a serializable SQL database based on Google's \Spanner{}~\cite{spanner}. We evaluated version 1.3.1 using \texttt{CONCAT} over \texttt{TEXT} fields, identified either by primary or secondary keys, both with and without indices. We found that when master nodes crashed, paused, or otherwise became unavailable to tablet servers, those tablet servers could exhibit a handful of G2-item anomalies. For instance, this cycle (condensed for clarity), shows two transactions which fail to observe each other's appends:

\begin{itemize}
    \setlength\itemsep{0em}
    \item[$T_1$:] \ldots~\texttt{append(3, 837)}~\ldots~\texttt{r(4, [}~\ldots~\texttt{874, 877, 883])}
    \item[$T_1$:] \ldots~\texttt{append(4, 885)}~\ldots~\texttt{r(3, [}~\ldots~\texttt{831, 833, 836])}
\end{itemize}

Every cycle we found involved multiple anti-dependencies; we observed no cases of G-single, G1, or G0. \YugaByte{}'s engineers traced this behavior to a race condition: after a leader election, a fresh master server briefly advertised an empty \textit{capabilities set} to tablet servers. When a tablet server observed that empty capabilities set, it caused every subsequent RPC call to include a read timestamp. \YugaByte{} should have ignored those read timestamps for serializable transactions, but did not, allowing transactions to read from inappropriate logical times. This issue was fixed in 1.3.1.2-b1.

% Bumping this up so it goes on the right page, lol I don't know how to latex
\begin{figure*}[htb!]
\includegraphics[width=\textwidth]{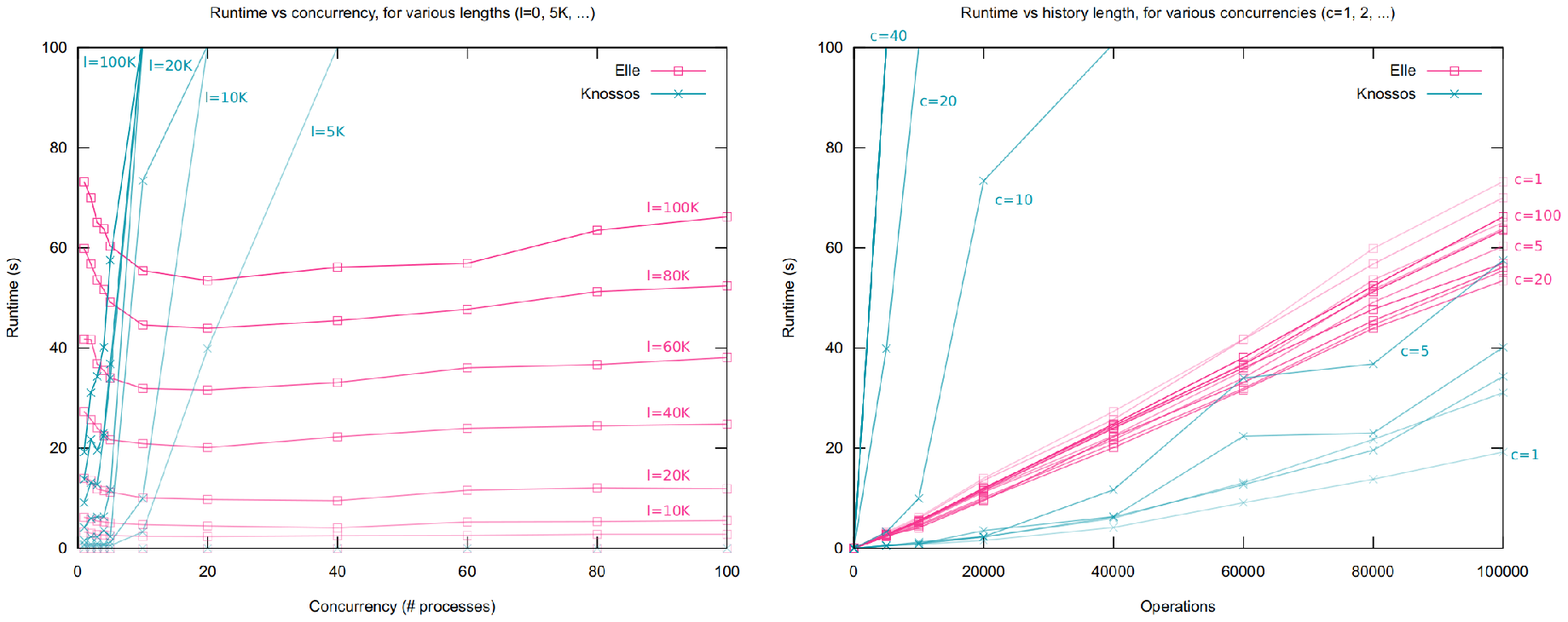}
\caption{Performance of \Name{} vs \Knossos{}.}
\label{fig:performance}
\end{figure*}

\subsection{FaunaDB}

\FaunaDB{}~\cite{fauna} is a deterministic, strict-serializable document database based on \Calvin{}~\cite{calvin}. It offers native list datatypes, but the client we used had no list-append function---we used strings with \texttt{concat} instead. While \FaunaDB{} claimed to provide (up to) strict serializability, we detected \textit{internal} inconsistencies in version 2.6.0, where a single transaction failed to observe its own prior writes:

\begin{itemize}
    \setlength\itemsep{0em}
    \item[$T_1$:] \texttt{append(0, 6), r(0, nil)}
\end{itemize}

These internal inconsistencies also caused \Name{} to infer G2 anomalies. Internal anomalies occurred frequently, under low contention, in clusters without any faults. However, they were limited to index reads. Fauna believes this could be a bug in which coordinators fail to apply tentative writes to a transaction's view of an index.

\subsection{Dgraph}

\Dgraph{}~\cite{dgraph} is a graph database with a homegrown transaction protocol influenced by Shacham, Ohad et al.~\cite{omid-reloaded} \Dgraph{}'s data model is a set of entity-attribute-value triples, and it has no native list datatype. However, it \textit{does} lend itself naturally to registers, which we analyzed with \Name{}. We evaluated \Dgraph{} version 1.1.1, which claimed to offer snapshot isolation, plus per-key linearizability.

Like \FaunaDB{}, \Dgraph{} transactions failed to provide internal consistency under normal operation: reads would fail to observe previously read (or written!) values. This transaction, for instance, set key 10 to 2, then read an earlier value of 1.

\begin{itemize}
    \setlength\itemsep{0em}
    \item[$T_1$:] \texttt{w(10, 2), r(10, 1)}
\end{itemize}

To find cycles over registers, we allowed \Name{} to infer partial version orders from the initial state, from writes-follow-reads within individual transactions, and (since \Dgraph{} claims linearizable keys) from the real-time order of operations. These inferred dependencies were often \textit{cyclic}---here, transaction $T_1$ finished writing key 540 a full three seconds before $T_2$ began, but $T_2$ failed to observe that write:

\begin{itemize}
    \setlength\itemsep{0em}
    \item[$T_1$:] \texttt{r(541, nil), w(540, 2)}
    \item[$T_2$:] \texttt{r(540, nil), w(544, 1)}
\end{itemize}

\Name{} automatically reports and discards these inconsistent version orders, to avoid generating trivial cycles, but it went on to identify numerous instances of read skew, both with and without real-time edges:

\begin{itemize}
    \setlength\itemsep{0em}
    \item[$T_1$:] \texttt{r(2432, 10), r(2434, nil)}
    \item[$T_2$:] \texttt{w(2434, 10)}
    \item[$T_3$:] \texttt{w(2432, 10), r(2434, 10)}
\end{itemize}

These cycles stemmed from a family of bugs in \Dgraph{} related to shard migration: transactions could read from freshly-migrated shards without any data in them, returning $nil$. Dgraph Labs is investigating these issues.

\subsection{Performance}

\Name{}'s performance on real-world workloads was excellent; where \Knossos{} (\Jepsen{}'s main linearizability checker) often timed out or ran out of memory after a few hundred transactions, \Name{} was able to check histories of hundreds of thousands of transactions in tens of seconds. To confirm this behavior experimentally, we designed a history generator which simulates clients interacting with an in-memory serializable-snapshot-isolated database, and analyzed those histories with both \Name{} and \Knossos{}.

Our histories were composed of randomly generated transactions performing one to five operations each, interacting with any of 100 possible objects at any point in time. We performed 100 appends per object. We generated histories of different lengths, and with varying numbers of concurrent processes, and measured both \Name{} and \Knossos{}' runtime. Since many \Knossos{} runs involved search spaces on the order of $10^{24}$, we capped runtimes at 100 seconds. All tests were performed on a 24-core Xeon with 128 GB of ram.

As figure \ref{fig:performance} shows, \Knossos{}' runtime rises dramatically with concurrency: given $c$ concurrent transactions, the number of permutations to evaluate is $c!$. Symmetries and pruning reduce the state space somewhat, but the problem remains fundamentally NP-complete. With 40+ concurrent processes, even histories of 5000 transactions were (generally) uncheckable in reasonable time frames. Of course, runtime rises with history length as well.

\Name{} does not exhibit \Knossos{}' exponential runtimes: it is primarily linear in the length of a history. Building indices, checking for consistent orders, looking for internal and aborted reads, constructing the inferred serialization graph, and detecting cycles are all linear-time operations. Unlike \Knossos{}, concurrency does not have a strong impact on \Name{}. With only one process, every transaction commits. As concurrency rises, some transactions abort due to conflicts, which mildly reduces the number of transactions we have to analyze. At high concurrency, more transactions interact with the same versions, and we infer more dependencies.

\section{Related Work}
As we discuss in Section \ref{sec:intro}, there has been a significant amount of work on history checkers in the concurrent programming community. As early as 1993, Wing \& Gong~\cite{wing-concurrent} simulated executions of linearizable objects to record concurrent histories, and described a checker algorithm which could search for bugs in those histories. \LineUp{}~\cite{lineup}, \Knossos{}~\cite{knossos}, and Lowe's linearizability checker~\cite{lowe} follow similar strategies. Gibbons \& Korach showed~\cite{gibbons} that sequential consistency checking was NP-complete via reduction to SAT.

Generating random operations, applying them to some implementation of a datatype, and checking that the resulting history obeys certain invariants is a key concept in \textit{generative}, or \textit{property-based} testing. Perhaps the most well-known implementation of this technique is \QuickCheck{}~\cite{quickcheck}, and \Jepsen{} applies a similar approach to distributed systems~\cite{jepsen-reports}. Majumdar \& Niksic argued probabilistically for the effectiveness of this randomized testing approach~\cite{random-testing}, which helps explain why our technique finds bugs.

Brutschy et al. propose both a static~\cite{brutschy-static} and a trace-based dynamic~\cite{brutschy-dynamic} analysis to find serializability violations in programs run atop weakly-consistent stores. Quite recently, Biswas \& Enea provided polynomial-time checkers for read committed, read atomic, and causal consistency, as well as exponential-time checkers for prefix consistency, snapshot isolation, and serializability.~\cite{biswas}

Using graphs to model dependencies among transactions has a long history in the database literature.  
The dependency graph model was first proposed by Bernstein~\cite{formalaspects,bernstein-cc} and later refined by Adya~\cite{adya-gil,adya-thesis}. Dependency graphs have been applied to the problem of safely running transactions at a mix of isolation levels~\cite{allocating} and to the problem of runtime concurrency control\cite{serializable-snapshot,dgcc}, in addition to reasoning formally about isolation levels and anomalous histories.

As attractive as dependency graphs may be as a foundation for database testing, they model orderings
among object versions and operations that are not necessarily visible to external clients. Instead of defining isolation levels in terms of internal operations, some declarative definitions
of isolation levels~\cite{cerone,ec} are based upon a pair of compatible dependency relations: a \emph{visibility relation} capturing the order in which writes are visible to transactions and an \emph{arbitration relation} capturing the order in which writes are committed.

The client-centric formalism of Crooks et al.~\cite{seeing} 
goes a step further, redefining consistency levels strictly in terms of client-observable states.  While both approaches, like ours, enable reasoning about existing isolation levels from the \emph{outside} of the database implementation, our goal is somewhat different.  We wish instead to provide a faithful \emph{mapping} between externally-observable events and Adya's data-centric definitions, which have become a \emph{lingua franca} in the database community. In so doing, we hope to build a bridge between two decades of scholarship on dependency graphs and emerging techniques for black-box database testing. 
%In such scenarios, \krk{huh?} in which we can have control over operations that we issue, it is possible to generate histories in which we can reason directly in Adya's model using evidence drawn only from external observations.

\section{Future Work \& Conclusions}
%\paragraph{Future Work}
\textit{Future Work}
\hspace{.3cm}
There are some well-known anomalies, like long fork, which \Name{} detects but tags as G2. We believe it should be possible to provide more specific hints to users about what anomalies are present. Ideally, we would like to tell a user exactly which isolation levels a given history does and does not satisfy.

Our approach ignores predicates and deals only in individual objects; we cannot distinguish between repeatable read and serializability. Nor can we detect anomalies like predicate-many-preceders. We would like to extend our model to represent predicates, and prove how to infer dependencies on them. One could imagine a system which somehow \textit{generates} a random predicate $P$, in such a way that any version of an object can be classified as in $P$ or not, and then using that knowledge to generate dependency edges for predicate-based reads.

%\paragraph{Conclusions}
\textit{Conclusions}
\hspace{.3cm}
We present \Name{}: a novel theory and tool for experimental verification of transactional isolation. By using datatypes and generating histories which couple the version history of the database to client-observable reads and writes, we can extract rich dependency graphs between transactions. We can identify cycles in this graph, categorize them as various anomalies, and present users with concise, human-readable explanations as to why a particular set of transactions implies an anomaly has occurred.

\Name{} is sound. it identifies G0, G1a, G1b, G1c, G-single, and G2 anomalies, as well as inferring cycles involving per-process and real-time dependencies. In addition, it can identify dirty updates, garbage reads, duplicated writes, and internal consistency violations. When \Name{} identifies an anomaly in an observation of database, it must be present in every interpretation of that observation.

\Name{} is efficient. It is linear in the length of a history and effectively constant with respect to concurrency. It can quickly analyze real-world histories of hundreds of thousands of transactions, even when processes crash leading to high logical concurrency. We see no reason why it cannot handle more. It is dramatically faster than linearizability checkers~\cite{knossos} and constraint-solver serializability checkers~\cite{gretchen}.

\Name{} is effective. It has found anomalies in every database we've checked, ranging from internal inconsistency and aborted reads to anti-dependency cycles.

\Name{} is general. Unlike checkers which hard-code a particular example of an anomaly (e.g. long fork), \Name{} works with arbitrary patterns of writes and reads over different types of objects, so long as those objects and transactions satisfy some simple properties: traceability and recoverability. Generating random histories with these properties is straightforward; list append is broadly supported in transactional databases. \Name{} can also make limited inferences from less informative datatypes, such as registers, counters, and sets.

\Name{} is informative. Unlike solver-based checkers, \Name{}'s cycle-detection approach produces short witnesses of specific transactions. Moreover, it provides a human-readable explanation of \textit{why} each witness must be an instance of the claimed anomaly.

We are aware of no other checker which combines these properties. Using \Name{}, testers can write a small test which verifies a wealth of properties against almost any database. The anomalies \Name{} reports can rule out (or tentatively support) that database's claims for various isolation levels. Moreover, each witness points to particular transactions at particular times, which helps engineers investigate and fix bugs. We believe \Name{} will make the database industry safer.

\subsection{Acknowledgements}

The authors wish to thank Asha Karim for discussions leading to \Name{}, and Kit Patella for her assistance in building the \Name{} checker.

%\section{Acknowledgements}
%\paa{N.B. we will leave this out of the anonymized submission and add back for the cameraready}
%\input{acks.tex}
\newpage
\bibliographystyle{abbrv}
\bibliography{main}

%\section{Appendix: Proofs}
%\input{appendix_peter.tex}

\end{document}